\DeclareMathOperator{\lcm}{lcm}
\DeclareMathOperator{\Col}{Col}
\definecolor{c1}{HTML}{dF7e8a}
\definecolor{c2}{HTML}{9dcbFF}
\definecolor{c3}{HTML}{528e86}
\definecolor{c4}{HTML}{b7ced2}
\definecolor{c5}{HTML}{a19a90}
\definecolor{c6}{HTML}{eae1eF}
\definecolor{c7}{HTML}{0b8eab}
\definecolor{c8}{HTML}{aFca54}
\definecolor{c9}{HTML}{b8aFc9}
\definecolor{d1}{HTML}{de9602}
\definecolor{d2}{HTML}{acc864}
\definecolor{d3}{HTML}{b0d3bf}
\definecolor{d4}{HTML}{F2F2EF}
\definecolor{green}{rgb}{0.1,0.7,0.1}
\newcommand{\green}{\color{green}}
\newtheorem{them}{Theorem}
\newtheorem{dfn}{Definition}
\newtheorem{lemm}{Lemma}
\newtheorem{prp}{Proposition}
\newtheorem{exa}{Example}
\newtheorem{remr}{Remark}
\begin{document}

\begin{frontmatter}
	\title{Structures of {\em M}-Invariant Dual Subspaces with Respect to a Boolean Network \thanksref{footnoteinfo}}    
	\thanks[footnoteinfo]{This work is supported by National Natural Science Foundation of China (Nos. 12071370 and 12131013). The material in this paper was not presented at any conference.} 
	
	\author[NWPU]{Dongyao Bi}\ead{bdy@mail.nwpu.edu.cn},
	\author[NWPU]{Lijun Zhang\corauthref{1}}\ead{zhanglj7385@nwpu.edu.cn},
	\author[Cagliari]{Kuize Zhang}\ead{kuize.zhang@unica.it},
	\author[NWPU1]{Shenggui Zhang}\ead{sgzhang@nwpu.edu.cn}
	
	\corauth[1]{Corresponding author.}
	\address[NWPU]{School of Marine Science and Technology, Northwestern Polytechnical University, Xi'an, Shaanxi 710072, P.R. China}
	\address[Cagliari]{Department of Electrical and Electronic Engineering, University of Cagliari, Cagliari 09123, Italy}
	\address[NWPU1]{School of Mathematics and Statistics, Northwestern Polytechnical University, Xi'an, Shaanxi 710129, P.R. China  }

	\begin{keyword}
		Boolean network \sep $M$-invariant dual subspace \sep equitable partition \sep complete characterization
	\end{keyword}
	
	\begin{abstract}
	This paper presents the following research findings on Boolean networks (BNs) and their dual subspaces. 
	First, we establish a bijection between the dual subspaces of a BN and the partitions of its state set. Furthermore, we demonstrate that a dual subspace is $M$-invariant if and only if the associated partition is equitable (i.e., for every two cells of the partition, every two states in the former have the same number of out-neighbors in the latter) for the BN's state-transition graph (STG). Here $M$ represents the structure matrix of the BN. 
Based on the equitable graphic representation, we provide, for the first time, a complete structural characterization of the smallest $M$-invariant dual subspaces generated by a set of Boolean functions. Given a set of output functions, we prove that a BN is observable if and only if the partition corresponding to the smallest $M$-invariant dual subspace generated by this set of functions is trivial (i.e., all partition cells are singletons). Building upon our structural characterization, we also present a method for constructing output functions that render the BN observable.
	\end{abstract}
	
\end{frontmatter}

\section{Introduction}
Boolean networks (BNs), initially introduced by Kauffman in 1969 \citep{kau69}, have emerged as a highly effective approach for modeling and analyzing genetic regulatory networks. To model the external inputs and their impact on the system's outputs, BNs have been extended to Boolean control networks (BCNs) \citep{Datta03, Ideker01}. The concept of the semi-tensor product (STP) of matrices, proposed by Cheng in 2001 \citep{bche11,che12}, has provided an algebraic framework for handling BNs and BCNs. 

The state space of a BN consists of all its Boolean state vectors. The set of all Boolean functions on the state space is called the dual (state) space. A dual subspace generated by a given set of Boolean functions is defined as the set of all Boolean functions that take these given Boolean functions as arguments. Hence generally a dual subspace has nothing to do with a BN's dynamics. Consider a BN, where $M$ is the structure matrix of the BN under the STP framework \citep{bche11,che12}, a dual subspace is $M$-invariant if, for every Boolean function $f$ within this dual subspace, the Boolean function resulted from applying the BN dynamics to the arguments of $f$ also belongs to the same dual subspace \citep{che21}. The evolution of Boolean functions that generate an $M$-invariant dual subspace can induce a dual system of the original BN.
The properties of a dual system induced by $M$-invariant dual subspaces were investigated in \citep{che21}.
\citep{che21} also designed an algorithm to compute the smallest $M$-invariant dual subspace generated by a given set of Boolean functions.
This paper provides a complete structural characterization of the smallest $M$-invariant dual subspace from a graph-theoretic perspective.

When dealing with a large-scale BN (BCN), the size of the structure matrix for the entire network becomes immense, making it impractical to compute within a reasonable time. However, the dual systems derived from a BN's $M$-invariant dual subspaces often exhibit compactness and still carry valuable information from the original BN. For example, in a BN with output functions, the minimal realization of the BN refers to the dynamic equation of a reduced dual subspace that contains the original output functions. The reduced dual subspace is the smallest $M$-invariant dual subspace generated by these output functions. Like the concept of minimal realization in control theory, an $M$-invariant dual subspace preserves essential properties of the original BN (BCN) while filtering out redundant information related to state transitions. The original BN (BCN) structure can be partly revealed in the dual dynamics. \citep{che21}. As a result, an algorithm was proposed in \citep{che21} to compute the smallest $M$-invariant dual subspace containing a given set of Boolean functions. For recent works on $M$-invariant dual subspaces, the reader is referred to, e.g., \citep{li23,liyf23}.

The main contributions of this paper are threefold. First, to investigate the attributes of $M$-invariant dual subspaces, we establish a bijection between partitions of the BN's state set and its dual subspaces.
We demonstrate that such a dual subspace of a BN is $M$-invariant if and only if (iff) the corresponding partition of the BN's state-transition graph (STG) is equitable (i.e., every two states in the same partition cell have an equal number of out-neighbors in any partition cell). Furthermore, the quotient digraph of this equitable partition can be utilized to describe the dual dynamics of the equivalence classes derived from this $M$-invariant dual subspace.
Second, 
we provide, for the first time, a complete structural characterization of the smallest $M$-invariant dual subspaces generated by a set of Boolean functions utilizing the equitable partition representation for the $M$-invariant dual subspaces. 
Third, we demonstrate that the unobservable subspace of a BN is the smallest $M$-invariant dual subspace generated by its output functions. We conclude that a BN with given output functions is observable iff the partition corresponding to the unobservable subspace is trivial (i.e., all partition cells are singletons). Building upon the structures of the smallest $M$-invariant dual subspaces generated by various output functions, we finally introduce a method for constructing observable output functions (i.e., output functions that make the BN observable). 

The remainder of this paper is organized as follows: Section \ref{II} surveys necessary results in graph theory and STP. Section \ref{III} presents the main findings of the current paper.
Section \ref{IV} examines the unobservable subspace of a BN and proposes methods for constructing output functions to render a given BN observable utilizing the aforementioned structures.

\section{Preliminaries}\label{II}
\subsection{Basic knowledge in graph theory}
\subsubsection{Basic concepts and notations \label{basic in graph}}
A digraph is denoted by ${\mathcal G}=(V, E)$, where $V=\{v_1,v_2,\ldots,v_n\}$ and $E\subseteq V\times V$ represent the vertex set and the edge set, respectively. For $(v_{i}, v_{j})\in E$, we refer to its two ends $v_{i}$ and $v_{j}$ as the tail and head of the edge, respectively. The ends of an edge are said to be adjacent to each other and incident to the edge. If $(v_{i}, v_{j})\in E$ (where $v_{i}$ and $v_{j}$ are not necessarily distinct), then $v_{i}$ is an in-neighbor of $v_{j}$, and $v_{j}$ is an out-neighbor of $v_{i}$. We define the in-neighbor (out-neighbor) set of $v_i$ as $N_{in}(v_i)$ ($N_{out}(v_i)$). The in-degree (out-degree) of $v_i$, denoted as $d_{in}(v_i)$ ($d_{out}(v_i)$), is the cardinality of $N_{in}(v_i)$ ($N_{out}(v_i)$).
We say $v_{j}$ is reachable from $v_{i}$ if there exists a path from $v_{i}$ to $v_{j}$.
The distance from $v_{i}$ to $v_{j}$, denoted by $\operatorname{dist}(v_{i},v_{j})$, is the length of the shortest paths from $v_{i}$ to $v_{j}$. 
Let $\operatorname{dist}^*_{in}(v_i)=\max\{\operatorname{dist}(v_{j},v_{i})|v_i\text{ is reachable from } v_j\}$. 
We denote by $N_{in}(v_{i}, k)$ the vertex set $\{v_{j}|\operatorname{dist}(v_{j},v_{i})=k\}$, where $k\in\{1,\ldots,\operatorname{dist}^*_{in}(v_{i})\}$. A loop, also called a self-loop, is an edge whose ends coincide. Specifically, if $v_{i}$ is incident with a loop, then $N_{in}(v_{i})=\{v_i\}\cup N_{in}(v_{i}, 1)$. Otherwise, $N_{in}(v_{i})=N_{in}(v_{i}, 1)$.
Similarly, we define $\operatorname{dist}^*_{out}(v_i)$ as $\max\{\operatorname{dist}(v_{i},v_{j})|v_j\text{ is reachable from } v_i\}$. Let $N_{out}(v_{i}, k)=\{v_{j}|\operatorname{dist}(v_{i},v_{j})=k\}$, where $k\in\{1,\ldots,\operatorname{dist}^*_{out}(v_{i})\}$.

The underlying graph of ${\mathcal G}$ is an undirected graph on the same vertex set; for each directed edge in ${\mathcal G}$, there exists an undirected edge with the same ends. A digraph is weakly connected if an undirected path exists in its underlying graph between any pair of vertices. Every digraph can be expressed uniquely (up to order) as a disjoint union of maximal weakly connected digraphs, which are called the components of ${\mathcal G}$.
For more information on graph theory, the reader is referred to \citep{bon08}.

A weighted digraph is a digraph with a weight function $w: E\rightarrow \mathbb{R}$ that assigns a weight $w((v_i,v_j))$ to each edge $(v_i,v_j)\in E$. For a given weighted digraph ${\mathcal G}$ (if ${\mathcal G}$ is unweighted, then edge weights are uniformly set to 1), the corresponding adjacency matrix $A({\mathcal G})$ is an $n\times n$ matrix defined as
$$
[A({\mathcal G})]_{ij}=\left\{\begin{array}{ll}
w((v_j,v_i)) ,&  (v_j,v_i)\in E;\\
0, & \text { otherwise. }\end{array}\right.
$$

\subsubsection{Graph partitions}
\label{section2.1.2}
For a digraph ${\mathcal G}=(V,E)$ with $n$ vertices and a given integer $1\leq k\leq n$, we call $\pi=\{C_1,C_2,\ldots,C_k\}$ a $k$-partition of $V$ if $\pi$ is a family of nonempty disjoint subsets of $V$ and $\cup_{i=1}^{k} C_i=V$. Each $C_i$ is referred to as a partition cell, where $1\leq i\leq k$. A partition is considered nontrivial if it contains at least one non-singleton cell; otherwise, it is trivial. The {\em characteristic matrix} $P(\pi)\in\{0,1\}^{n\times k}$ of the partition $\pi$ is defined as follows:
\begin{equation*}
[P]_{ij} =\left\{\begin{array}{l}1, \text { if }~v_i \in C_{j}; \\ 0, \text { otherwise, }\end{array} \quad 1 \leqslant i \leqslant n, 1 \leqslant j \leqslant k.\right.
\end{equation*}

\begin{dfn}\citep{agu17}{\label{d2.1}}
Let ${\mathcal G}=(V,E)$ be a weighted digraph with adjacency matrix $A({\mathcal G})$. A partition $\pi$ of $V$ is equitable with respect to ${\mathcal G}$ if for all pairs of partition cells $(C_i,C_j)$, where $i,j=1,\ldots,k$, and for all vertices $v_s, v_t \in C_i$
\begin{equation}
	\label{2.1.2-1}
	\sum_{v_k\in N_{out}(v_s)\cap C_j}[A({\mathcal G})]_{ks}=\sum_{v_k\in N_{out}(v_t)\cap C_j}[A({\mathcal G})]_{kt}.
\end{equation}
Particularly, when ${\mathcal G}$ is unweighted, $(\ref{2.1.2-1})$ degenerates to
\begin{equation}
	\label{def2.1-2}
	|N_{out}(v_s)\cap C_j|=|N_{out}(v_t)\cap C_j|.
\end{equation}
\end{dfn}

\begin{remr}
For unweighted digraphs, (\ref{def2.1-2}) implies that vertices within the same cell have an equal number of out-neighbors in any given cell.Fig. \ref{fig2.1} illustrates three examples of equitable partitions and their quotient digraphs. 
\end{remr}

For an equitable partition $\pi=\{C_1,C_2,\ldots,C_k\}$ of $V$ with respect to ${\mathcal G}$, the quotient digraph ${\mathcal G} / \pi$ of ${\mathcal G}$ over $\pi$ has the vertex set $V({\mathcal G} / \pi)=\left\{c_{1}, c_{2}, \ldots, c_{k}\right\}$ and edge set $E({\mathcal G} / \pi)=\left\{(c_{i}, c_{j}) \mid \exists v_s\in C_i:N_{out}(v_s)\cap C_j\neq \emptyset \right\}$; and the weight of $(c_{i}, c_{j})$ is $\sum_{v_k\in N_{out}(v_s)\cap C_j}[A({\mathcal G})]_{ks}, \forall v_s\in C_i$. With a slight abuse of notation, we refer to an equitable partition of $V$ with respect to ${\mathcal G}$ as an equitable partition of ${\mathcal G}$.

\begin{lemm}\citep{car07}{\label{l2.2}}
Let ${\mathcal G}$ be a digraph. A partition $\pi$ is equitable iff there exists a matrix $H$ satisfying $P^{\top}A=H P^{\top}$, where $P$ is the characteristic matrix of $\pi$ and $A$ is the adjacency matrix of ${\mathcal G}$. Moreover, if $\pi$ is equitable, then $H$ is exactly the adjacency matrix of the quotient digraph ${\mathcal G}/\pi$.
\end{lemm}

In Lemma \ref{l2.2}, $[P^{\top}A]_{ij}$ is the sum of the weights of the edges originating from $v_j$ and terminating in $C_i$, i.e., $\sum_{v_k\in N_{out}(v_j)\cap C_i}[A({\mathcal G})]_{kj}$. And $[HP^{\top}]_{ij}$ is equal to $[H]_{ik}$, where $C_k$ denotes the cell containing $v_j$. 

\begin{dfn}{\label{d2.3}}
A partition $\pi_1$ is said to be finer than $\pi_2$ if each cell of $\pi_2$ can be expressed as the union of some cells in $\pi_1$, denoted by $\pi_2\preceq\pi_1$. In this situation, we also call $\pi_2$ coarser than $\pi_1$.
\end{dfn}


\begin{dfn}\citep{che12}{\label{d2.3-1}}
Let $\Pi$ be the set of all the partitions of $V$. Consider $S\subset\Pi$.
\begin{itemize}
	\item[(i)] $\pi\in \Pi$ is an upper bound (a lower bound) of $S$ if $\pi'\preceq \pi$ ($\pi\preceq \pi'$) for all $\pi'\in S$.
	\item[(ii)] $\pi\in \Pi$ is the least upper bound of $S$, also the join of $S$, (denoted by $\pi=\sqcup S$), if $\pi$ is an upper bound of $S$, and for any other upper bound $\pi'$ of $S$, we have $\pi\preceq \pi'$.
	\item[(iii)] $\pi\in \Pi$ is the greatest lower bound of $S$, also the meet of $S$, (denoted by $\pi=\sqcap S$), if $\pi$ is a lower bound of $S$, and for any other lower bound $\pi'$ of $S$, we have $\pi'\preceq \pi$.
\end{itemize}
\end{dfn}

\begin{exa}\label{e2.4}

Fig. \ref{fig1-a} shows an unweighted digraph ${\mathcal G}$ with four vertices. Its adjacency matrix is 
\begin{equation*}
    A({\mathcal G})=\left[\begin{array}{cccc}
1 & 1 & 1 & 0\\
0 & 0 & 0 & 0\\
0 & 0 & 0 & 1\\
0 & 0 & 0 & 0
\end{array}\right].
\end{equation*}
Consider a partition $\pi_1:=\{C_1:=\{v_1,v_2\},C_2:=\{v_3\},C_3:=\{v_4\}\}$, its characteristic matrix is
\begin{equation*}
    P(\pi_1)=\left[\begin{array}{ccc}
1 & 0 & 0 \\
1 & 0 & 0 \\
0 & 1 & 0 \\
0 & 0 & 1
\end{array}\right].
\end{equation*}
For the only non-singleton cell $C_1$, $|N_{out}(v_1)\cap C_1|=|N_{out}(v_2)\cap C_1|=1$. According to Definition \ref{d2.1}, $\pi_1$ is equitable. Fig. \ref{fig1-b} shows the corresponding quotient digraph ${\mathcal G}/\pi_1$, where the vertex corresponding to the only one non-singleton cell $\{v_1,v_2\}$ is denoted as $v_{1,2}$, and the weight of edge $(v_{1,2}, v_{1,2})$ is $A({\mathcal G})_{12}=1$ because of $v_2\in C_1$ and $N_{out}(v_2)\cap C_1=\{v_1\}$. The adjacency matrix of ${\mathcal G}/\pi_1$ is 
\begin{equation*}
    H=\left[\begin{array}{ccc}
1 & 1 & 0\\
0 & 0 & 1\\
0 & 0 & 0
\end{array}\right],
\end{equation*}
which satisfies $P^{\top}A=HP^{\top}.$

For the equitable partitions $\pi_2=\{\{v_1,v_2,v_3\},\{v_4\}\}$ and $\pi_3=\{\{v_1,v_3\},\{v_2\},\{v_4\}\}$, their respective quotient digraphs are shown in Fig. \ref{fig1-c} and Fig. \ref{fig1-d}. In these figures, vertex $v_{1,2,3}$ corresponds to the non-singleton cell $\{v_1,v_2,v_3\}$ and vertex $v_{1,3}$ corresponds to $\{v_1,v_3\}$.

Since $\{v_1,v_2,v_3\}=\{v_1,v_2\}\cup\{v_3\}=\{v_1,v_3\}\cup\{v_2\}$, $\pi_1$ and $\pi_3$ are finer than $\pi_2$. The join of $\pi_1$ and $\pi_3$ is $\{\{v_1\},\{v_2\},\{v_3\},\{v_4\}\}$ and their meet is $\pi_2$. That is
$$\pi_2\preceq\pi_1,\quad\pi_2\preceq\pi_3,$$
$$\pi_1\sqcup\pi_3=\{\{v_1\},\{v_2\},\{v_3\},\{v_4\}\},$$
$$\pi_1\sqcap\pi_3=\pi_2.$$
\begin{figure}
\centering
\setlength{\unitlength}{10mm}
\subcaptionbox{A digraph ${\mathcal G}$.\label{fig1-a}}{
\begin{tikzpicture}[->,thick]
	\tikzstyle{every node}=[draw,circle,radius=1mm,inner sep=0pt,minimum size=0.8em];
	\node (v1)[label=below:{$v_1$}]at (3,0){};
	\node (v2)[label=above:{$v_2$}]at (2,0.8){};
	\node (v3)[label=below:{$v_3$}]at (2,0){};
	\node (v4)[label=below:{$v_4$}]at (1,0){};
	
	\path (v2) edge (v1);
	\path (v3) edge (v1);
	\path (v4) edge (v3);
	\draw (v1) to [out=45,in=315,looseness=10] (v1);
	\end{tikzpicture}
	}
\quad\quad\quad\quad
\subcaptionbox{${\mathcal G}/\pi_1$, where $\pi_1=\{\{v_1,v_2\},\{v_3\},\{v_4\}\}$.\label{fig1-b}}{
	\begin{tikzpicture}[->,thick]
		\tikzstyle{every node}=[draw,circle,radius=1mm,inner sep=0pt,minimum size=0.8em];
		\node (v1)[label={[shift={(0.0,-0.0)}]$v_{1,2}$}]at (3,0){};
		\node (v2)[label={[shift={(0.0,-0.8)}]$v_3$}]at (2,0){};
		\node (v3)[label={[shift={(0,-0.8)}]$v_4$}]at (1,0){};
		
		\path (v2) edge (v1);
		\path (v3) edge (v2);
		\draw (v1) to [out=45,in=315,looseness=10] (v1);
		
		\end{tikzpicture}
		}

	\subcaptionbox{${\mathcal G}/\pi_2$, where $\pi_2=\{\{v_1,v_2,v_3\},\{v_4\}\}$.\label{fig1-c}}{
		\begin{tikzpicture}[->,thick]
			\tikzstyle{every node}=[draw,circle,radius=1mm,inner sep=0pt,minimum size=0.8em];
			\node (v1)[label={[shift={(0.0,-0.2)}]$v_{1,2,3}$}]at (3,0){};
			\node (v2)[label={[shift={(0.0,-0.8)}]$v_4$}]at (1.0,0){};
			
			\path (v2) edge (v1);
			\draw (v1) to [out=45,in=315,looseness=10] (v1);
			\end{tikzpicture}
			}
		 \quad\quad\quad\quad
		\subcaptionbox{${\mathcal G}/\pi_3$, where $\pi_3=\{\{v_1,v_3\},\{v_2\},\{v_4\}\}$.\label{fig1-d}}{
			\begin{tikzpicture}[->,thick]
				\tikzstyle{every node}=[draw,circle,radius=1mm,inner sep=0pt,minimum size=0.8em];
				\node (v1)[label={[shift={(0.2,-0.0)}]$v_{1,3}$}]at (3,0){};
				\node (v2)[label=left:{$v_2$}]at (1.5,0.5){};
				\node (v3)[label=left:{$v_4$}]at (1.5,-0.5){};
				
				\path (v2) edge (v1);
				\path (v3) edge (v1);
				\draw (v1) to [out=45,in=315,looseness=10] (v1);
				\end{tikzpicture}
			}	
			\caption{A digraph ${\mathcal G}$ and its three quotient digraphs.}
			\label{fig2.1}
			\end{figure}
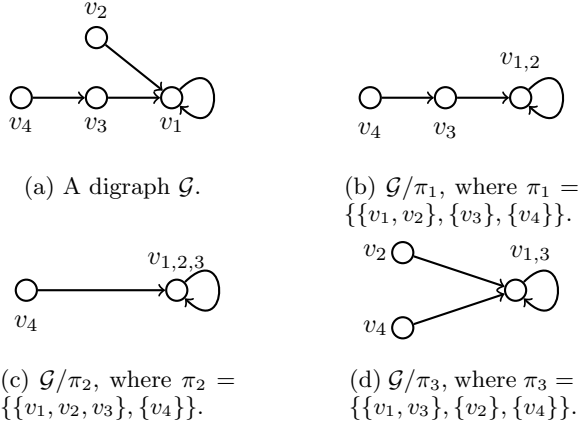
			\end{exa}
			
			\subsection{The semi-tensor product (STP) of matrices}
Notation:
			
			\begin{itemize}
				\item ${\mathcal M}_{n \times m}$: the set of $n\times m$ real matrices.
				\item $\delta_{n}^{i}$: the $i$th column of the identity matrix $I_{n}$.
				\item $\Delta_{n}=\left\{\delta_{n}^{i} \mid i=1,2, \ldots, n\right\}$.
				\item $\operatorname{Col}(A)$: the set of columns of $A$.
				\item $ {\mathcal R}(A)$: the row space of $A$.
				\item A matrix $L \in {\mathcal M}_{n \times m}$ is called a logical matrix if
				$\operatorname{Col}(L) \subseteq \Delta_{n}$. Denote the set of $n \times m$ logical matrices by $\mathcal{L}_{n \times m}$.
				\item Matrix $\left[\delta_{n}^{i_{1}}, \ldots, \delta_{n}^{i_{m}}\right] \in \mathcal{L}_{n \times m}$ can be expressed as $\delta_{n}\left[i_{1}, \ldots, i_{m}\right]$ for brevity.
				\item $\sim$: one-to-one correspondence between binary logical values in ${\mathcal D}:=\{0,1\}$ and vectors in $\Delta_{2}$. That is, $ 1 \sim \delta_{2}^{1}$ and $0 \sim \delta_{2}^{2}$.
				\item $[n;m]$: the set of integers $x$ with $n\leq x\leq m$.
			\end{itemize}
			
			\begin{dfn}\citep{che12,che09}\label{d2.1.1}
				Let $M\in {\mathcal M}_{m\times n}$, $N\in {\mathcal M}_{p\times q}$, and $t=\lcm\{n,p\}$ be the least common multiple of $n$ and $p$.
				The semi-tensor product (STP) of $M$ and $N$, denoted by $M\ltimes N$, is defined as
				\begin{align}\label{2.1.1}
					\left(M\otimes I_{t/n}\right)\left(N\otimes I_{t/p}\right)\in {\mathcal M}_{mt/n\times qt/p},
				\end{align}
				where $\otimes$ is the Kronecker product.
			\end{dfn}
			
			
\begin{dfn}\citep{che09}
Let $A\in \mathcal{M}_{p\times n}$ and $B\in \mathcal{M}_{q\times n}$. The Khatri-Rao Product of $A$ and $B$ is defined as follows.
\begin{align}
\begin{array}{ll}
A*B\!\!&=\![\Col_1(\!A\!)\!\ltimes\!\Col_1(\!B\!)\!,\ldots,\!\Col_n(A)\!\ltimes\!\Col_n(B)]\\
&\in \mathcal{M}_{pq\times n}.
\end{array}
\end{align}
\end{dfn}
			
\subsection{\texorpdfstring{$M$}--invariant dual subspaces of BNs}
A BN can be expressed as a set of Boolean functions
\begin{align}\label{2.2.1}
	x_i(t+1)=f_i(x_1(t),\ldots,x_n(t)), i\in[1;n],
\end{align}
where $x_i(t)\in {\mathcal D}$ and $f_i:{\mathcal D}^n\rightarrow {\mathcal D}$, $i\in[1;n]$. Equivalently, a BN is represented as 
\begin{align}\label{2.2.3}
    x(t+1) = {\mathsf F}(x(t)),
\end{align}
where $x(t)=[x_1(t),...,x_n(t)]^{\top}\in{\mathcal D}^n$.
We identify $x_i\in {\mathcal D}$ with vector $\bar{x}_i\in\Delta_2$, $i\in[1;n]$, denoted as $x_i\sim\bar{x}_i$, by the one-to-one correspondence $ 1 \sim \delta_{2}^{1}$ and $0 \sim \delta_{2}^{2}$. 
Define $\bar{x}:=\ltimes_{i=1}^n\bar{x}_i$. We extend the relation $\sim$ as follows:
$$(x_1,x_2,\ldots,x_n):=x\sim\bar{x},$$
where $x\in{\mathcal D}^n$ and $\bar{x}\in\Delta_{2^n}$.
Furthermore, we note 
$${\mathcal D}^n\sim\Delta_{2^n}.$$
Thus, a Boolean function can be expressed in the following form
\begin{align}\label{2.2.2}
	f_i(x_1,x_2,\ldots,x_n)\sim M_i\bar{x},~i\in[1;n].
\end{align}
where $M_i\in {\mathcal L}_{2\times 2^n}$ is the structure matrix of $f_i$. Moreover, the algebraic state space representation (ASSR) of BN $(\ref{2.2.1})$ is as follows,
\begin{align}\label{2.2.4}
\bar{x}(t+1)=M\bar{x}(t),
\end{align}
where
$
M=M_1*M_2*\cdots*M_n\in {\mathcal L}_{2^n\times 2^n}
$
is called the transition matrix of $(\ref{2.2.1})$.	

For BN $(\ref{2.2.1})$, its state space is defined as ${\mathcal X}:={\mathcal D}^n$. The dual (state) space is defined as the set of all Boolean functions of the state variables $x_1, x_2, \ldots, x_n$. We denote the dual space as ${\mathcal X}^{*}:={\mathcal F}_{\ell}\{x_1, x_2, \ldots, x_n\}$ \citep{che10,zhang24}.
For $z_1, z_{2}, \ldots, z_r\in {\mathcal X}^{*}$, the dual subspace generated by $z_1, z_2, \ldots, z_r$ is defined as
${\mathcal Z}^{*}:={\mathcal F}_{\ell}\{z_1, z_2, \ldots, z_r\}$, which is the set of all Boolean functions of Boolean functions $z_1, z_2, \ldots, z_r$.
%
Let $\bar{z}=\ltimes_{i=1}^r\bar{z}_i$. Then $\bar{z}=G\bar{x},$
where $G\in {\mathcal L}_{2^r\times 2^n}$ is called the structure matrix of ${\mathcal Z}^{*}$ \citep{che10}.

For any Boolean function $f(z(x))=(f\circ z)(x)\in {\mathcal Z}^{*}={\mathcal F}_{\ell}\{z_1, z_2, \ldots, z_r\}=:{\mathcal F}_{\ell}\{z\}$, its logical form satisfies 
$$f(z(x))\sim F{\bar z}=FG{\bar x},$$
where $F\in {\mathcal L}_{2\times 2^r}$ is the structure matrix of $f$. 
Thus, ${\mathcal Z}^{*}$ can be identified as the set of Boolean functions with structure matrices in $\{FG\in {\mathcal L}_{2\times 2^n}|F\in {\mathcal L}_{2\times 2^r}\}$.


The following definition is an equivalent logical form of the $M$-invariant dual subspace as presented in \citep{che21}, where $M$ is the structure matrix of $\mathsf F$.


\begin{dfn}
\label{d2.7}
Given a dual subspace ${\mathcal Z}^{*}={\mathcal F}_{\ell}\{z\}:={\mathcal F}_{\ell}\{z_1, z_2, \ldots, z_r\}\subset{\mathcal X}^{*}$, ${\mathcal Z}^{*}$ is called \emph{$M$-invariant} if for every Boolean function $f(z)=f(z(x))=(f\circ z)(x)\in{\mathcal Z}^{*}$, the Boolean function $(f\circ z)({\mathsf F}(x))$ resulted from the action of the BN dynamics {\green $\mathsf F$} on $x$ still belongs to ${\mathcal Z}^{*}$.
\end{dfn}

\begin{remr}
\label{explanition of M}
In the algebraic form of Boolean functions, ${\mathcal Z}^{*}$ is called \emph{$M$-invariant} if its structure matrix $G$ satisfies the following condition: for every $F\in {\mathcal L}_{2\times 2^r}$ there exists $F'\in {\mathcal L}_{2\times 2^r}$ such that $FGM=F'G$ \citep{che21}.
\end{remr}

\begin{them}\citep{che21}\label{l2.8}
Consider BN $(\ref{2.2.1})$ with its ASSR $(\ref{2.2.4})$.
A dual subspace ${\mathcal Z}^{*}$ is $M$-invariant with respect to $(\ref{2.2.1})$ if there exists a logical matrix $H\in {\mathcal L}_{2^r\times 2^r}$ such that
\begin{align} \label{3.1.101}
GM=HG.
\end{align}
\end{them}

\begin{remr}
In light of Theorem \ref{l2.8}, the matrix $F'$ in Remark \ref{explanition of M} can be identified as $FH$.
\end{remr}

\begin{lemm}\citep{che21}\label{t2.8}
A dual subspace ${\mathcal Z}^{*}$ is $M$-invariant iff its dynamics can be expressed as
\begin{align} \label{3.1.102}
\bar{z}(t+1)=H\bar{z}(t),
\end{align}
where $H\in {\mathcal L}_{2^r\times 2^r}$.
\end{lemm}

The dynamics (\ref{3.1.102}) is called the dual dynamics of BN (\ref{2.2.1}) with respect to ${\mathcal Z}^{*}$ and $H$ is the state dual transition matrix.




\begin{lemm}\citep{che21} \label{t2.10}
Assume that ${\mathcal Z}^{*}_i$, $i=1,2$, are $M$-invariant dual subspaces. 
Then
$
{\mathcal Z}^{*}={\mathcal Z}^{*}_1\cup {\mathcal Z}^{*}_2
$
is also 
$M$-invariant. 
\end{lemm}

\begin{remr}
\begin{itemize}
    \item Given BN (\ref{2.2.1}). According to \citep{bche11}, ${\mathcal Z}^{*}_1={\mathcal F}_{\ell}\{z^1_1,z^1_2,\ldots,z_r^1\}$ is called a regular dual subspace if there exist $(z^2_1,z^2_2,\ldots,z_{n-r}^2)\subset {\mathcal X}^{*}$ such that $(z^1_1,z^1_2,\ldots,z_r^1,z^2_1,z^2_2,\ldots,z_{n-r}^2)$ is another coordinate frame. Moreover, ${\mathcal Z}^{*}_1$ is regular iff its structure matrix $G\in {\mathcal L}_{2^r\times 2^n}$ satisfies
\begin{equation}
    \label{regular}
    \sum_{i=1}^{2^n}[G]_{ji}=2^{n-r},~j=1,2,\ldots,2^{r}.
\end{equation}
\item According to the definition in \citep{bche11}, an invariant dual subspace was defined as a dual subspace that is both regular and $M$-invariant. Since regularity is a strong requirement, $M$-invariant dual subspace is a more general concept than invariant dual subspace.
\end{itemize}
\end{remr}

\begin{exa}{\label{e3.2}}
A Boolean equation about the gene network of the $\lambda$ bacteriophage can be expressed in the following form
\begin{equation}{\label{ep-new2-3.8}}
\left\{
\begin{aligned}
N(t+1) &=[\neg cI(t)] \wedge[\neg cro(t)], \\
cI(t+1) &=[\neg cro(t)] \wedge[cI(t) \vee cII(t)], \\
cII(t+1) &=[\neg cI(t)]  \wedge[N(t) \vee cIII(t)], \\
cIII(t+1) &=[\neg cI(t)] \wedge N(t), \\
cro(t+1) &=[\neg cI(t)] \wedge[\neg cII(t)].
\end{aligned}
\right.
\end{equation}
where $N(t),cI(t),cII(t),cIII(t),cro(t)\in {\mathcal D}$. Suppose 
$(x_1,x_2,x_3,x_4,x_5)=(N,cI,cII,cIII,cro)$.
Let $\bar{x}(t):=\ltimes_{i=1}^5\bar{x}_i(t)$, where ${x}_i\sim\bar{x}_i\in \Delta_2$. The algebraic form of every Boolean function $f_i$ is
 \begin{equation}
			\bar{x}_i(t+1) =M_i\bar{x}(t),~i\in\left[1;5\right],
\end{equation}
where 
\begin{equation*}
\begin{array}{l l l l l l l l l l l l l l l l l l}
M_1 =\delta_{2}[&2 &2 &2 &2 &2 &2 &2 &2 &2 &1 &2 &1 &2 &1 &2 &1\\
&2 &2 &2 &2 &2 &2 &2 &2 &2 &1 &2 &1 &2 &1 &2 &1],\\
M_2 =\delta_{2}[&2 &1 &2 &1 &2 &1 &2 &1 &2 &1 &2 &1 &2 &2 &2 &2\\
&2 &1 &2 &1 &2 &1 &2 &1 &2 &1 &2 &1 &2 &2 &2 &2],\\
M_3 =\delta_{2}[&2 &2 &2 &2 &2 &2 &2 &2 &1 &1 &1 &1 &1 &1 &1 &1\\
&2 &2 &2 &2 &2 &2 &2 &2 &1 &1 &2 &2 &1 &1 &2 &2 ],\\
M_4 =\delta_{2}[&2 &2 &2 &2 &2 &2 &2 &2 &1 &1 &1 &1 &1 &1 &1 &1\\
&2 &2 &2 &2 &2 &2 &2 &2 &2 &2 &2 &2 &2 &2 &2 &2 ],\\
M_5 =\delta_{2}[&2 &2 &2 &2 &2 &2 &2 &2 &2 &2 &2 &2 &1 &1 &1 &1\\
&2 &2 &2 &2 &2 &2 &2 &2 &2 &2 &2 &2 &1 &1 &1 &1 ].\\
\end{array} 
\end{equation*}
Moreover, the ASSR of (\ref{ep-new2-3.8}) is 
$${\bar x}(t+1)=M{\bar x}(t),$$ where
$$
\begin{array}{ll}
M&=M_1*M_2*M_3*M_4*M_5\\
&=\begin{array}{l l l l l l l l l l l l l l l l l l}
\delta_{32}[\!\!&32&24&32&24&32&24&32&24&26&2&26&2&25&9&25&9\\
&32&24&32&24&32&24&32&24&28&4&32&8&27&11&31&15].
\end{array}
\end{array}
$$
Consider $z_1,z_2\in {\mathcal X}^{*}$, where
$$\bar{z}_1=G_1\bar{x}, \bar{z}_2=G_2\bar{x}$$ and
$$
\begin{array}{l l l l l l l l l l l l l l l l l l}
G_1=\delta_{2}[&2&1&2&1&2&1&2&1&1&1&1&1&1&1&1&1\\
&2&1&2&1&2&1&2&1&1&1&2&1&2&1&1&1],\\
G_2=\delta_{2}[&1&1&1&1&1&1&1&1&1&1&1&1&1&1&1&1\\
&1&1&1&1&1&1&1&1&1&1&1&1&2&1&1&2].
\end{array}
$$
The structure matrix of dual subspace ${\mathcal Z}^{*}:={\mathcal F}_{\ell}\{z_1,z_2\}$ generated by $z_1$ and $z_2$ is
\begin{equation}
\begin{array}{l l l l l l l l l l l l l l l l l l}
G&=G_1*G_2\\
&=\begin{array}{l l l l l l l l l l l l l l l l l l}
\delta_{4}[&3&1&3&1&3&1&3&1&1&1&1&1&1&1&1&1\\
&3&1&3&1&3&1&3&1&1&1&3&1&4&1&1&2].
\end{array}
\end{array}
\label{e-13}
\end{equation}
There exists an $H=\begin{array}{cc c c c}
\delta_{4}[1 &1 &2 &3]
\end{array}$ satisfying $GM=HG$. According to Lemma \ref{t2.8}, we get that ${\mathcal Z}^{*}$ is $M$-invariant. 

Under the original BN (\ref{ep-new2-3.8}), the dual dynamics with respect to ${\mathcal Z}^{*}$ is
\begin{align}
\begin{array}{l l }
\bar{z}(t+1)&=G\bar{x}(t+1)=GM\bar{x}(t)=HG\bar{x}(t)\\
&=H\bar{z}(t),
\end{array}
\end{align}
where $\bar{z}=\bar{z}_1\ltimes\bar{z}_2$.
Since $$\begin{aligned}
H=&H_1*H_2\\
=&\begin{array}{cc c c c}
\delta_{2}[&1 &1 &1 &2]
\end{array}*\begin{array}{cc c c c}
\delta_{2}[&1 &1 &2 &1],
\end{array}
\end{aligned}$$
the dynamics of ${\mathcal Z}^{*}$ can be expressed in logical forms as 
\begin{equation}
\left\{
\begin{aligned}
    z_1(t+1) &=h_1(z(t)), \\
    z_2(t+1) &=h_2(z(t)),
\end{aligned}
\right.
\end{equation}
where the structure matrices of $h_1, h_2\in{\mathcal Z}^{*}$ are $H_1$ and $H_2$, respectively. 

Since ${\mathcal Z}^{*}$ is $M$-invariant and not regular based on (\ref{regular}), ${\mathcal Z}^{*}$ is not an invariant dual subspace.
\end{exa}

\section{A graph representation of a BN and its \texorpdfstring{$M$}--invariant dual subspaces}\label{III}

In this section, we establish a bijection between dual subspaces of a BN and partitions of its state set. We prove that two lattices defined on the set of dual subspaces and all partitions are isomorphic. Moreover, we reveal that a dual subspace is $M$-invariant iff the corresponding partition is equitable. Based on these results, we give a complete structural characterisation of the smallest $M$-invariant dual subspaces generated by a set of Boolean functions.

\subsection{Dual subspaces and partitions}\label{III-A}
Given a dual subspace ${\mathcal Z}^{*}={\mathcal F}_{\ell}\{z_1,\ldots, z_r\}$, recall that $\bar{z}:=\ltimes_{i=1}^r\bar{z}_i=G\bar{x}$, where $G\in {\mathcal L}_{2^r\times 2^n}$ is the structure matrix. 
We define a partition ${\pi_G}:=\{\{\bar{x}|G\bar{x}=\delta_{2^r}^{i}\}| i\in[1;2^r],\text{ there is at least one }x\in\Delta_{2^n}\text{ such that }Gx=\delta_{2^r}^i\}$ of state set $\Delta_{2^n}$.
We observe that $G^{\top}$ is exactly the characteristic matrix of ${\pi_G}$. Furthermore, it is evident that ${\pi_G}$ is uniquely determined by the row space ${\mathcal R}(G)$ of $G$.
That is,
${\pi_{G_1}}={\pi_{G_2}}$ iff ${\mathcal R}(G_1)={\mathcal R}({G_2})$. 

As a matter of fact, ${\mathcal Z}^{*}$ is also uniquely determined by ${\mathcal R}(G)$. As mentioned before, ${\mathcal Z}^{*}$ can be identified as a collection of Boolean functions with structure matrices $\{FG\in {\mathcal L}_{2\times 2^n}|F\in {\mathcal L}_{2\times 2^r}\}$. Since $\{FG_1\in {\mathcal L}_{2\times 2^n}|F\in {\mathcal L}_{2\times 2^r}\}=\{FG_2\in {\mathcal L}_{2\times 2^n}|F\in {\mathcal L}_{2\times 2^r}\}$ iff ${\mathcal R}(G_1)={\mathcal R}({G_2})$, two dual subspaces are equal iff their structure matrices $G_1$ and $G_2$ satisfy ${\mathcal R}(G_1)={\mathcal R}({G_2})$.

Next, we establish a bijection between dual subspaces and partitions of the state set.


\begin{them} 
\label{l3.1}
Let ${\mathcal X}^{**}$ be the family of dual subspaces over ${\mathcal X}\sim\Delta_{2^n}$, and let $\Pi$ be the family of partitions of $\Delta_{2^n}$. Define a mapping ${\mathcal P}:{\mathcal X}^{**}\rightarrow \Pi$ by ${\mathcal P}({\mathcal Z}^{*}):={\pi_G}$, where $G$ is the structure matrix of the dual subspace ${\mathcal Z}^{*}$. Then, ${\mathcal P}$ is a bijection.
\end{them}

\begin{proof}
$\mathcal{P}$ is indeed a mapping because as shown before,  $\mathcal{Z}_1^*=\mathcal{Z}_2^*$ implies $\pi_{G_1}=\pi_{G_2}$, where $G_1$ and $G_2$ are the structure matrices of $\mathcal{Z}_1^*$ and $\mathcal{Z}_2^*$, respectively.

				
	
	(1) ${\mathcal P}$ is surjective.
 For any partition $\pi\in \Pi$ with the characteristic matrix $G^{\top}$, we can construct a dual subspace whose structure matrix has row space ${\mathcal R}(G)$. (Note that the row number of a structure matrix is always a power of 2. In some cases, we may need to add all-zero rows to satisfy this condition.)
	
	(2) ${\mathcal P}$ is injective. Consider two dual subspaces ${\mathcal Z}^{*}_1$ and ${\mathcal Z}^{*}_2$ with structure matrices $G_1$ and $G_2$, respectively. If ${\mathcal P}({\mathcal Z}^{*}_1)={\mathcal P}({\mathcal Z}^{*}_2)$ (i.e., $\pi_{G_1}=\pi_{G_2}$), then ${\mathcal R}(G_1)={\mathcal R}({G_2})$. It follows that ${\mathcal Z}^{*}_1= {\mathcal Z}^{*}_2$.
		
	Therefore, ${\mathcal P}$ is a bijection.
\end{proof}


\begin{lemm}
	\label{l3.2}
	Given ${\mathcal Z}^{*}_{1},{\mathcal Z}^{*}_{2} \in {\mathcal X}^{**}$, ${\mathcal P}({\mathcal Z}^{*}_1)\preceq{\mathcal P}({\mathcal Z}^{*}_2)$ iff ${\mathcal Z}^{*}_{1}\subseteq{\mathcal Z}^{*}_{2}$.
\end{lemm}
\begin{proof}
	Let $G_1$ and $G_2$ be the structure matrices of ${\mathcal Z}^{*}_{1}$ and ${\mathcal Z}^{*}_{2}$, respectively. Then, ${\pi_{G_1}}\preceq{\pi_{G_2}}$ iff ${\mathcal R}(G_1)\subseteq{\mathcal R}(G_2)$. Moreover, ${\mathcal Z}^{*}_{1}\subseteq{\mathcal Z}^{*}_{2}$ iff ${\mathcal R}(G_1)\subseteq{\mathcal R}(G_2)$. Thus, ${\mathcal P}({\mathcal Z}^{*}_1)\preceq{\mathcal P}({\mathcal Z}^{*}_2)$ iff ${\mathcal Z}^{*}_{1}\subseteq{\mathcal Z}^{*}_{2}$.
\end{proof}

Since ${\mathcal P}$ is a bijection, ${\mathcal P}^{-1}$ exists. From Lemma \ref{l3.2}, ${\mathcal P}$ and ${\mathcal P}^{-1}$ are both \em{order-preserving}. Thus two lattices $({\mathcal X}^{**},\subseteq)$ and $(\Pi,\preceq)$ are \em{isomorphic} according to \citep[Theorem 14.2]{che12}. It is straightforward to show the following proposition.

\begin{prp}
	\label{l3.3}
	Consider ${\mathcal Z}^{*}_{1},{\mathcal Z}^{*}_{2}\in {\mathcal X}^{**}$.
	\begin{itemize}
		\item[(i)] ${\mathcal P}({\mathcal Z}^{*}_{1}\cap{\mathcal Z}^{*}_{2})={\mathcal P}({\mathcal Z}^{*}_1)\sqcap {\mathcal P}({\mathcal Z}^{*}_2)$.
		\item[(ii)] ${\mathcal P}({\mathcal Z}^{*}_{1}\cup{\mathcal Z}^{*}_{2})={\mathcal P}({\mathcal Z}^{*}_1)\sqcup {\mathcal P}({\mathcal Z}^{*}_2)$.
	\end{itemize}
\end{prp}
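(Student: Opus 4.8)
The plan is to derive both identities from the fact, recorded just before the proposition, that ${\cal P}$ is an order isomorphism from $({\cal X}^*,\subseteq)$ onto $(\Pi,\preceq)$: an order isomorphism carries greatest lower bounds to greatest lower bounds and least upper bounds to least upper bounds, and both statements are instances of this. Concretely, I would first observe that in $({\cal X}^*,\subseteq)$ the meet of ${\cal Z}_1$ and ${\cal Z}_2$ is the set intersection ${\cal Z}_1\cap{\cal Z}_2$ --- this is again a dual subspace, since the intersection of two subsets of the finite Boolean algebra ${\cal X}$ that are each closed under Boolean combinations of their elements is again such a set, hence of the form ${\cal F}_{\ell}\{\cdot\}$ --- and that the join of ${\cal Z}_1$ and ${\cal Z}_2$ is ${\cal Z}_1\cup{\cal Z}_2={\cal F}_{\ell}\{z^1_1,\ldots,z^1_p,z^2_1,\ldots,z^2_q\}$, which is a dual subspace by Lemma~\ref{t2.10} and is manifestly the $\subseteq$-smallest one containing both ${\cal Z}_1$ and ${\cal Z}_2$. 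So it suffices to check that ${\cal P}$ intertwines $\sqcap$ and $\sqcup$.

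For (i): from ${\cal Z}_1\cap{\cal Z}_2\subseteq{\cal Z}_i$, $i=1,2$, Lemma~\ref{l3.2} gives ${\cal P}({\cal Z}_1\cap{\cal Z}_2)\preceq{\cal P}({\cal Z}_i)$, so ${\cal P}({\cal Z}_1\cap{\cal Z}_2)$ is a lower bound of $\{{\cal P}({\cal Z}_1),{\cal P}({\cal Z}_2)\}$. Conversely, let $\pi$ be any lower bound; by the bijectivity in Lemma~\ref{l3.1}, $\pi={\cal P}({\cal Z})$ for a unique ${\cal Z}\in{\cal X}^*$, and $\pi\preceq{\cal P}({\cal Z}_i)$ together with Lemma~\ref{l3.2} yields ${\cal Z}\subseteq{\cal Z}_i$ for $i=1,2$, hence ${\cal Z}\subseteq{\cal Z}_1\cap{\cal Z}_2$; applying Lemma~\ref{l3.2} once more gives $\pi={\cal P}({\cal Z})\preceq{\cal P}({\cal Z}_1\cap{\cal Z}_2)$. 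Thus ${\cal P}({\cal Z}_1\cap{\cal Z}_2)$ is the greatest lower bound of $\{{\cal P}({\cal Z}_1),{\cal P}({\cal Z}_2)\}$, i.e. ${\cal P}({\cal Z}_1\cap{\cal Z}_2)={\cal P}({\cal Z}_1)\sqcap{\cal P}({\cal Z}_2)$.

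For (ii): the argument is dual. From ${\cal Z}_i\subseteq{\cal Z}_1\cup{\cal Z}_2$ and Lemma~\ref{l3.2}, ${\cal P}({\cal Z}_i)\preceq{\cal P}({\cal Z}_1\cup{\cal Z}_2)$, so ${\cal P}({\cal Z}_1\cup{\cal Z}_2)$ is an upper bound of $\{{\cal P}({\cal Z}_1),{\cal P}({\cal Z}_2)\}$; and if $\pi={\cal P}({\cal Z})$ is any upper bound, then ${\cal Z}_i\subseteq{\cal Z}$ for $i=1,2$, so ${\cal Z}_1\cup{\cal Z}_2\subseteq{\cal Z}$ by the minimality noted above, whence ${\cal P}({\cal Z}_1\cup{\cal Z}_2)\preceq{\cal P}({\cal Z})=\pi$. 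Hence ${\cal P}({\cal Z}_1\cup{\cal Z}_2)={\cal P}({\cal Z}_1)\sqcup{\cal P}({\cal Z}_2)$. Both parts may alternatively be obtained in one stroke from \cite[Theorem~14.2]{che12}, since order-isomorphic posets are simultaneously lattices and the isomorphism respects $\sqcap$ and $\sqcup$; the two short arguments above simply make this explicit.

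The one place calling for a little care --- and what I would double-check --- is the identification of the lattice operations of $({\cal X}^*,\subseteq)$: that ${\cal Z}_1\cap{\cal Z}_2$ is genuinely a dual subspace, so that ${\cal P}$ can be applied to it and it is the meet, and that ${\cal Z}_1\cup{\cal Z}_2$ in the sense of Lemma~\ref{t2.10} is the join. Both are routine, as indicated, using only that each ${\cal Z}_i$ is closed under Boolean combinations, that ${\cal X}$ is finite, and the definition of ${\cal F}_{\ell}$; everything else is a bookkeeping exercise with the bijection ${\cal P}$ and its two-sided order-preservation.
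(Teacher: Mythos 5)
Your proposal is correct and follows essentially the same route as the paper: the paper derives the proposition directly from the lattice isomorphism between $({\cal X}^*,\subseteq)$ and $(\Pi,\preceq)$ established via Lemmas \ref{l3.1} and \ref{l3.2} (citing \cite[Theorem 14.2]{che12}) and declares the rest straightforward, and your argument simply spells out that transfer of meets and joins, including the one genuinely checkable point that ${\cal Z}_1\cap{\cal Z}_2$ and ${\cal Z}_1\cup{\cal Z}_2$ are again dual subspaces and are the lattice operations of $({\cal X}^*,\subseteq)$.
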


\begin{remr}\label{p3.3}
	Considering various dual subspaces, we conclude the following properties of their corresponding partitions.	
	
	\begin{itemize}
		\item [(i)] Given that ${\mathcal Z}^{*}\subseteq{\mathcal X}^{*}$, it follows that ${\mathcal P}({\mathcal Z}^{*})\preceq {\mathcal P}({\mathcal X}^{*})$. Moreover, ${\mathcal P}({\mathcal X}^{*})=\{\{\delta_{2^n}^1\},\{\delta_{2^n}^2\},\ldots,\{\delta_{2^n}^{2^n}\}\}$ constitutes the finest partition of the state set.

		\item [(ii)] Consider the case where ${\mathcal Z}^{*}={\mathcal F}_{\ell}\{z\}$ for $z\in{\mathcal X}^{*}$. Partition ${\mathcal P}({\mathcal Z}^{*})$ is a 2-partition with cells $\{C, \Delta_{2^n}\backslash C\}$, where $C:=\{{\bar x}\in \Delta_{2^n}|z(x)\sim\delta_2^1\}$.
  For notational brevity, we denote by ${\mathcal Z}^{*}_C$ the dual subspace whose corresponding partition is $\{C, \Delta_{2^n}\backslash C\}$. Moreover, we express ${\mathcal P}({\mathcal Z}^{*})$ as ${\mathcal P}(z)$.
		
		\item [(iii)] Given ${\mathcal Z}^{*}$, ${\mathcal P}(z)\preceq {\mathcal P}({\mathcal Z}^{*})$ for all $z\in{\mathcal Z}^{*}$.
		\item [(iv)] For ${\mathcal Z}^{*}={\mathcal F}_{\ell}\{z_1, z_2, \ldots, z_r\}$, ${\mathcal P}({\mathcal Z}^{*})={{\mathcal P}(z_1)}\sqcup{{\mathcal P}(z_2)}\sqcup\cdots\sqcup{{\mathcal P}(z_r)}$.
	\end{itemize}
\end{remr}

\subsection{\texorpdfstring{$M$}--invariant dual subspaces and equitable partitions}\label{III-AA}

The partitions corresponding to $M$-invariant dual subspaces possess the following properties.


\begin{them}{\label{t3.3}}
	For BN $(\ref{2.2.1})$ with STG ${\mathcal G}$, a dual subspace ${\mathcal Z}^{*}$ is $M$-invariant with respect to $(\ref{2.2.1})$ iff ${\mathcal P}({\mathcal Z}^{*})$ is equitable. Moreover, the dual transition matrix $H$ of ${\mathcal Z}^{*}$ is precisely the adjacency matrix of the quotient digraph ${\mathcal G}/{\mathcal P}({\mathcal Z}^{*})$.
\end{them}
\begin{proof}
	Suppose that a dual subspace ${\mathcal Z}^{*}$ with structure matrix $G$ is $M$-invariant. Then $G^{\top}$ is the characteristic matrix of ${\mathcal P}({\mathcal Z}^{*})$ according to Theorem \ref{l3.1}. By Theorem \ref{l2.8}, there exists a logical matrix $H$ such that 
	\begin{equation}
		\label{e12}
		GM=HG.
	\end{equation}
	For ${\mathcal G}$, the state set $\Delta_{2^n}$ is its vertex set, and $M$ is its adjacency matrix. By Lemma \ref{l2.2}, (\ref{e12}) implies that 
	${\mathcal P}({\mathcal Z}^{*})$ is equitable and $H$ is the adjacency matrix of ${\mathcal G}/{\mathcal P}({\mathcal Z}^{*})$. 
	%
\end{proof}
\begin{exa}
Let's consider the BN (\ref{ep-new2-3.8}) in Example \ref{e3.2}. Its STG ${\mathcal G}$ is illustrated in Fig. \ref{stg of exm2}. For the dual subspace ${\mathcal Z}^{*}$ with structure matrix $G$ as given in (\ref{e-13}), the characteristic matrix of ${\mathcal P}({\mathcal Z}^{*})$ is $G^{\top}$. In Fig. \ref{stg of exm2}, vertices are color-coded to represent the distinct cells of ${\mathcal P}({\mathcal Z}^{*})$. 

In Example \ref{e3.2}, we have previously proven that ${\mathcal Z}^{*}$ is $M$-invariant. According to Theorem \ref{t3.3}, we can deduce that ${\mathcal P}({\mathcal Z}^{*})$ is equitable. The corresponding quotient digraph ${\mathcal G}/{\mathcal P}({\mathcal Z}^{*})$ is depicted in Fig \ref{fig4-b}. The adjacency matrix of ${\mathcal G}/{\mathcal P}({\mathcal Z}^{*})$ is precisely $H$, the dual state transition matrix of ${\mathcal Z}^{*}$.

\begin{figure}
\centering
\subcaptionbox{${\mathcal G}$, the STG of BN (\ref{ep-new2-3.8}).\label{stg of exm2}}
{\begin{tikzpicture}[->,thick]
\tikzstyle{every node}=[draw,circle,radius=0.8mm,inner sep=0pt,minimum size=1.5em];
\node (v1) [fill=d2] at (1,0.8){$1$};
\node (v2)[fill=d4] at (6,2.4){$2$};
\node (v3)[fill=d2] at (1,1.6){$3$};
\node (v4)[fill=d4] at (6,1.6){$4$};
\node (v5)[fill=d2] at (1,2.4){$5$};
\node (v6)[fill=d4] at (6,0.8){$6$};
\node (v7)[fill=d2] at (1,-0.8){$7$};
\node (v8)[fill=d4] at (6,0){$8$};
\node (v9)[fill=d4] at (4,2.4){$9$};
\node (v10)[fill=d4] at (5,3.2){$10$};
\node (v11)[fill=d4]  at (4,1.6){$11$};
\node (v12)[fill=d4]  at (5,2.4){$12$};
\node (v13)[fill=d4]  at (3,-0.8){$13$};
\node (v14)[fill=d4] at (3,2.4){$14$};
\node (v15)[fill=d4]  at (3,0){$15$};
\node (v16)[fill=d4]  at (3,3.2){$16$};
\node (v17)[fill=d2]  at (1,-1.6){$17$};
\node (v18)[fill=d4]  at (6,-0.8){$18$};
\node (v19)[fill=d2]  at (1,-2.4){$19$};
\node (v20)[fill=d4]  at (6,-1.6){$20$};
\node (v21)[fill=d2]  at (1,-3.2){$21$};
\node (v22)[fill=d4]  at (6,-2.4){$22$};
\node (v23)[fill=d2]  at (1,3.2){$23$};
\node (v24)[fill=d4]  at (7,0.4){$24$};
\node (v25)[fill=d4]  at (4,0){$25$};
\node (v26)[fill=d4]  at (5,1.6){$26$};
\node (v27)[fill=d2]  at (1,0){$27$};
\node (v28)[fill=d4]  at (5,0){$28$};
\node (v29)[fill=d1]  at (0,0){$29$};
\node (v30)[fill=d4]  at (3,1.6){$30$};
\node (v31)[fill=d4]  at (7,-1){$31$};
\node (v32)[fill=c2]  at (2,0){$32$};

\path (v29) edge (v27);
\path (v1) edge (v32);
\path (v3) edge (v32);
\path (v5) edge (v32);
\path (v7) edge (v32);
\path (v17) edge (v32);
\path (v19) edge (v32);
\path (v21) edge (v32);
\path (v23) edge (v32);
\path (v27) edge (v32);
\path (v32) edge (v15);
\path (v14) edge (v9);
\path (v16) edge (v9);
\path (v30) edge (v11);
\path (v13) edge (v25);
\path (v15) edge (v25);
\path (v9) edge (v26);
\path (v11) edge (v26);
\path (v25) edge (v28);
\path (v10) edge (v2);
\path (v12) edge (v2);
\path (v26) edge (v4);
\path (v28) edge (v8);
\path (v2) edge (v24);
\path (v4) edge (v24);
\path (v6) edge (v24);
\path (v8) edge (v24);
\path (v18) edge (v24);
\path (v20) edge (v24);
\path (v22) edge (v24);
\draw (v24) to [out=45,in=315,looseness=6] (v24);
\draw (v31) to [out=45,in=315,looseness=6] (v31);




\end{tikzpicture}
}

\subcaptionbox{The quotient digraph of ${\mathcal P}({\mathcal Z}^{*})$, where $G^{\top}$ is its characteristic matrix.\label{fig4-b}}
{\begin{tikzpicture}[->,thick]
				\tikzstyle{every node}=[draw,circle,radius=0.2mm,inner sep=0pt,minimum size=1em];
				\def \n {8}
				\def \radius {0.7cm}
				\def \margin {25}
				\node (v1) [fill=d4]at (3, 1-1){};
				\put(100,-5){$e$}
				\node (v5)[fill=c2] at (2, 1-1){};
				\node (v7)[fill=d2] at (1, 1-1){};
				\node (v8)[fill=d1] at (-0.0,1-1){};
				\path (v8) edge (v7);
				\path (v7) edge (v5);
				\path (v5) edge (v1);
				\draw (v1.-60) arc (210:180+300:2mm);
			\end{tikzpicture}
	}
\caption{An illustration of $M$-invariant dual subspace using partitions.}
\label{new-fig3.1}
\end{figure}
\end{exa}

For BN (\ref{2.2.1}) and a given dual subspace ${\mathcal Z}^{*}$, Cheng et al. \citep{che21} gave an algorithm to compute the smallest $M$-invariant dual subspace $\overline{{\mathcal Z}^{*}}$ containing ${\mathcal Z}^{*}$. We call $\overline{{\mathcal Z}^{*}}$ {\em the smallest $M$-invariant dual subspace generated by} ${\mathcal Z}^{*}$.
Based on Lemma \ref{l3.2}, we get ${\mathcal P}({\mathcal Z}^{*})\preceq{\mathcal P}(\overline{{\mathcal Z}^{*}})$ from ${\mathcal Z}^{*}\subseteq\overline{{\mathcal Z}^{*}}$.
On the other hand, for any $M$-invariant dual subspace $\overline{{\mathcal Z}^{*}_1}$ containing ${\mathcal Z}^{*}$, it follows from the term ``smallest'' that $\overline{{\mathcal Z}^{*}}\subseteq \overline{{\mathcal Z}^{*}_1}$. Thus, ${\mathcal P}(\overline{{\mathcal Z}^{*}})\preceq {\mathcal P}(\overline{{\mathcal Z}^{*}_1})$ by Lemma \ref{l3.2}. According to Theorem \ref{t3.3}, ${\mathcal P}(\overline{{\mathcal Z}^{*}})$ and ${\mathcal P}(\overline{{\mathcal Z}^{*}_1})$ are both equitable. We conclude that ${\mathcal P}(\overline{{\mathcal Z}^{*}})$ is {\em the coarsest equitable partition finer than ${\mathcal P}({\mathcal Z}^{*})$.} 

	In the subsection, we establish a one-to-one correspondence ${\mathcal P}$ between $M$-invariant dual subspaces and equitable partitions of a BN's STG. Moreover, we prove the isomorphism between two lattices: $({\mathcal X}^{**},\subseteq)$ and $(\Pi,\preceq)$. This allows us to study the inclusion relation between dual subspaces from a partition perspective. Theorem \ref{t3.3} gives a graphical representation of $M$-invariant dual subspaces. Additionally, we examine the smallest $M$-invariant dual subspace generated by a given dual subspace from a graphical standpoint. Based on these results, we obtain a complete structural characterization of $M$-invariant dual subspaces, which will be presented in the subsequent subsection. 

\subsection{Structures of \texorpdfstring{$M$}--invariant dual subspaces}\label{III-B}

Given a dual subspace ${\mathcal Z}^{*}$, let $\overline{{\mathcal Z}^{*}}$ be the smallest $M$-invariant dual subspace containing ${\mathcal Z}^{*}$. As concluded in Subsection \ref{III-AA}, the partition ${\mathcal P}(\overline{{\mathcal Z}^{*}})$ of the STG is the {\em{coarsest}} equitable partition finer than ${\mathcal P}({\mathcal Z}^{*})$. We define ${\mathcal P}(\overline{{\mathcal Z}^{*}})$ as the equitable partition generated by ${\mathcal P}({\mathcal Z}^{*})$.

For a general digraph ${\mathcal G}$ and a partition $\pi$ of $\mathcal G$, we define $E\pi$ as the coarsest equitable partition finer than $\pi$. We call $E\pi$ the equitable partition generated by $\pi$. In the specific case where ${\mathcal G}$ is the STG of a BN, if $\pi={\mathcal P}({\mathcal Z}^{*})$ then $E\pi={\mathcal P}(\overline{{\mathcal Z}^{*}})$. 

\begin{lemm}
	\label{new-l3.8}
	Given a digraph ${\mathcal G}$, let $\pi_1$ and $\pi_2$ be two partitions of ${\mathcal G}$. If $\pi_1 \preceq E\pi_2$, then $E\pi_1 \preceq E\pi_2$. Moreover, if $\pi_1 \preceq \pi_2$, then $E\pi_1 \preceq E\pi_2$.
\end{lemm}	

\begin{proof}
	As illustrated above, $E\pi_1$ is the coarsest equitable partition finer than $\pi_1$. It follows that 
		$E\pi_1 \preceq E\pi_2$ from $\pi_1 \preceq E\pi_2$.
		Moreover, if $\pi_1 \preceq \pi_2$, then $\pi_1 \preceq E\pi_2$. We get $E\pi_1 \preceq E\pi_2$.
\end{proof}

Before proceeding further, let us introduce an operation of shrinking. Consider a digraph ${\mathcal G}$ and
$C\subseteq V({\mathcal G})$. To shrink $C$ means to merge all vertices of $C$ into a single vertex and then add a self-loop to the new vertex if an edge exists between these vertices. We denote the resulting digraph as ${\mathcal G}/C$ and the new vertex as $c$. In ${\mathcal G}/C$, the edges between the new vertex $c$ and the vertices in $V({\mathcal G})\setminus C$ are inherited from the edges of ${\mathcal G}$. Note that ${\mathcal G}/C$ may generally have multiple edges between some pair of vertices. We replace multiple edges with a single edge.

Let $\pi:=\{C_1,\ldots,C_k\}$ be a partition of $V({\mathcal G})$. Consider $C$ which is a subset of $C_1$. Define a quotient partition of $\pi$ induced by $C$ as $\pi/C=\{C_1/C, C_2, \ldots, C_k\}$, where $C_1/C=\{c\}\cup (C_1\backslash C)$.

If $\pi$ is equitable, we can simultaneously shrink each cell to a new vertex. The resulting digraph is denoted by ${\mathcal G}/\pi$. In fact, ${\mathcal G}/\pi$ is the quotient digraph of ${\mathcal G}$ over $\pi$ as defined in Subsection \ref{section2.1.2}. 
Furthermore, for any partition $\pi_1$ such that $\pi_1\preceq \pi$, there exists a quotient partition $\pi_1/\pi$ of $\pi_1$ obtained by shrinking certain subsets of the cells of $\pi_1$, where these subsets correspond to all the cells of $\pi$. We call $\pi_1/\pi$ a quotient partition of $\pi_1$ induced by $\pi$. Hence, $\pi_1/\pi$ can be regarded as a partition of ${\mathcal G}/\pi$. Examples of the shrinking operation are illustrated in Fig. \ref{fig:example}. Fig. \ref{fig:example-a} shows a given STG, while Fig. \ref{fig:example-b} depicts its quotient digraph for the equitable partition $\pi:=\{\{v_1,v_4\},\{v_2,v_3\},\{v_5\},\{v_6,v_7\},\{v_8\}\}$. For Fig. \ref{fig:example-a}, there exists a partition $\pi_1\!\!:=\!\!\{\!\{v_1,v_4,v_5\},\!\{v_2,v_3\},\!\{v_6,v_7\},$ $\{v_8\}\}$ satisfying $\pi_1\preceq\pi$. In Fig. \ref{fig:example-b}, the quotient partition of $\pi_1$ is $\pi_1/\pi:=\{\{v_{1,4},v_5\},\{v_{2,3}\},\{v_{6,7}\},\{v_8\}\}$. The vertices in Fig. \ref{fig:example-a} and Fig. \ref{fig:example-b} are color-coded to represent the distinct cells of $\pi_1$ and $\pi_1/\pi$, respectively. 


We first recall the Algorithm 3.11 from \citep{che21}, which determines the smallest $M$-invariant dual subspace $\overline{{\mathcal Z}^{*}}$ containing the given dual subspace ${\mathcal Z}^{*}$. Suppose ${\mathcal Z}^{*}={\mathcal F}_{\ell}\{z_0:\sim G_0{\bar x}\}$. Algorithm 3.11 iteratively extends the subspace as follows: ${\mathcal F}_{\ell}\{z^{i+1}\}={\mathcal F}_{\ell}\{z^{i}\cup\{z_{i+1}\}\}$, $i=0,1,\ldots$, where $z^0=\{z_0\}$ and $z_i(x):\sim G_i{\bar x}=G_0M^i{\bar x}$, $i=1,\ldots$. The process terminates at step $k$ if 
\begin{equation}
	\label{space-eq}
	{\mathcal F}_{\ell}\{z^k\}={\mathcal F}_{\ell}\{z^{k+1}\}.
\end{equation}
We get ${\mathcal P}({\mathcal F}_{\ell}\{z^i\})\preceq{\mathcal P}({\mathcal F}_{\ell}\{z^{i+1}\})$, $i=0,\ldots,k-1$. And the equality holds iff $i\geq k$.

From the above algorithm, ${\mathcal P}(\overline{{\mathcal Z}^{*}})$ has the following properties.
\begin{cor}
	\label{cor-3.9}
	Consider a STG ${\mathcal G}$ and a dual subspace ${\mathcal Z}^{*}$.
	If two states share the same out-neighbor and belong to the same cell of ${\mathcal P}({\mathcal Z}^{*})$, then they are also in the same cell of ${\mathcal P}(\overline{{\mathcal Z}^{*}})$.
\end{cor}	
\begin{proof}
Let the vertex set of the STG be $\{v_1,v_2,\ldots,v_{2^n}\}$.
If $v_1$ and $v_2$ share the same out-neighbor, then the partition $E\pi:=\{\{v_1, v_2\},\{v_3\},\ldots,\{v_{2^n}\}\}$ is equitable. Moreover, if $v_1$ and $v_2$ are also in the same cell ${\mathcal P}({\mathcal Z}^{*})$, then ${\mathcal P}({\mathcal Z}^{*})\preceq E\pi$. Consequently, by Lemma \ref{new-l3.8}, we have ${\mathcal P}(\overline{{\mathcal Z}^{*}})\preceq E\pi$. We can therefore conclude that $v_1$ and $v_2$ are in the same cell of ${\mathcal P}(\overline{{\mathcal Z}^{*}})$.
\end{proof}

 Suppose $v_1,v_2 \in \Delta_{2^n}$ are two vertices in the STG that share the same out-neighbor and are in the same cell of ${\mathcal P}({\mathcal Z}^{*})$. From the proof of Corollary \ref{cor-3.9}, we get that ${\mathcal P}({\mathcal Z}^{*})\preceq{\mathcal P}(\overline{{\mathcal Z}^{*}})\preceq E\pi$, where $E\pi=\{\{v_1, v_2\},\{v_3\},\ldots,\{v_{2^n}\}\}$.

Let us define ${\mathcal G}_0^1:=STG/E\pi$ and $\pi_0^1:={\mathcal P}({\mathcal Z}^{*})/E\pi$. Then, $E\pi_0^1:={\mathcal P}(\overline{{\mathcal Z}^{*}})/\{v_1,v_2\}$.
Analogous to the proof of Corollary \ref{cor-3.9}, if two vertices in ${\mathcal G}_0^1$ share the same out-neighbor and belong to the same cell of $\pi^1_{0}$, they necessarily belong to the same cell of $E\pi^1_{0}$. Thus, we can recursively apply the shrinking process to the in-neighbors of vertices in ${\mathcal G}_0^1$. In the new digraph resulting from shrinking vertices of ${\mathcal G}_0^1$, $\pi^1_{0}$ will induce a new partition. We can similarly perform the above shrinking operation on this new digraph and partition. The iterative process of simplifying the graph structure and its corresponding partitions discussed above is systematically presented and summarized in Algorithm \ref{alg-1}.

\begin{algorithm}
	\caption{Simplifying the STG of a BN~\eqref{2.2.3} and a partition ${\mathcal P}({\mathcal Z}^{*})$, where ${\mathcal Z}^*$ is a dual subspace.}
	\label{alg-1}
	\begin{algorithmic}[1]
		\Function {Shrinking}{STG, ${\mathcal P}({\mathcal Z}^{*})$}
        \State {${\mathcal{G}}\gets$ the STG}
        \State {$\pi_{0}\gets{\mathcal P}({\mathcal Z}^{*})$}
		\While  {there are vertices $v,u\in V({\mathcal{G}})$ sharing the same out-neighbor belonging to the same cell of $\pi_{0}$}
        \State {${\mathcal{G}}\gets{\mathcal{G}}/\{u,v\}$}
        \State {$\pi_{0}\gets\pi_{0}/\{u,v\}$}
        \EndWhile
        \State\Return {(${\mathcal{G}}$, $\pi_{0}$)}
		\EndFunction
	\end{algorithmic}
\end{algorithm}

In the resultant digraph ${\mathcal{G}}$ of Algorithm \ref{alg-1}, the in-degree of each vertex is bounded above by the cardinality of ${\mathcal P}({\mathcal Z}^{*})$. For illustration, Fig. \ref{fig:example-b} is derived from Fig. \ref{fig:example-a} through the application of this SHRINKING operation.

\begin{lemm}
\label{g is quotient}
  Suppose (${\mathcal{G}}$, $\pi_{0}$)=SHRINKING(STG, ${\mathcal P}({\mathcal Z}^{*})$). ${\mathcal{G}}$ is the quotient digraph of the STG corresponding to an equitable partition. 
\end{lemm}

\begin{proof}
    Let $G^0, G^1, \ldots, G^m$ denote the sequence of digraphs generated during the execution of the SHRINKING operation, where $G^0$ is the initial STG; for each $i\in [0;m-1]$, $G^{i+1} = G^i / \{u_i, v_i\}$, where $u_i, v_i\in V(G^i)$ represents the pair of vertices merged at each iteration; $G^m = G$ is the final output digraph. We show, on induction on $i$, that each $G^i$ is the quotient digraph of the STG corresponding to an equitable partition.

    We know that each vertex of $G^i$ corresponds to a vertex subset of the STG and $G^i$ corresponds to a partition $\pi^i$ of the vertex set of the STG. 
    Clearly, $G^0$ corresponds to the trivial equitable partition $E\pi(STG)$. 
    Suppose $\pi^i:=\{C_1,C_2,\ldots,C_m\}$ is equitable. Consider $G^{i+1} = G^i / {u_i, v_i}$. Let $u_i$, $v_i$ and their same out-neighbor correspond to cells $C_1$, $C_2$ and $C_3$, respectively. Then, the out-neighbors of vertices in $C_1$ and $C_2$ are all in $C_3$. 
    Thus, $\pi^{i+1}:=\{C_1\cup C_2,C_3,\ldots,C_m\}$ is equitable. 

    Therefore, ${\mathcal{G}}$ is the quotient digraph of the STG corresponding to an equitable partition. 
    \end{proof}

Let ${\mathcal{G}}$ correspond to the equitable partition $\pi$, That is, ${\mathcal G}=STG/\pi$. According to the \textbf{while}-condition in Algorithm \ref{alg-1} and Corollary \ref{cor-3.9}, we know that ${\mathcal P}({\mathcal Z}^{*})\preceq{\mathcal P}(\overline{{\mathcal Z}^{*}}) \preceq\pi$, $\pi_0={\mathcal P}({\mathcal Z}^{*})/\pi$ and $E\pi_0={\mathcal P}(\overline{{\mathcal Z}^{*}})/\pi$. 
 In a special case where $E\pi_0$ is trivial, it follows that ${\mathcal P}(\overline{{\mathcal Z}^{*}})=\pi$ and ${\mathcal G}$ is the quotient digraph of the STG corresponding to ${\mathcal P}(\overline{{\mathcal Z}^{*}})$. The subsequent discussion about the structural characteristics of $E\pi_{0}$ is based on this observation.



Before our further analysis, we first give the following properties of ${\mathcal G}$.
\begin{lemm}\label{l3.6}
For any given BN, every component of its STG contains a unique directed cycle or loop.
\end{lemm}
\begin{proof}
	Without loss of generality, we assume the STG is connected. Given that each vertex has an out-degree of 1, the underlying graph of the STG contains exactly one undirected cycle, where a loop is considered as a cycle of length 1. Since a trajectory in a BN eventually converges to a directed cycle \citep{che12}, the directed cycle is unique in STG.	
 \end{proof}

As mentioned before, for any equitable partition of the STG, its corresponding quotient digraph represents the STG of the associated dual dynamics. Thus, by virtue of Lemma \ref{l3.6}, we can assert that each component of the quotient digraphs of the STG contains precisely one directed cycle or loop.

By Lemma \ref{g is quotient}, the obtained digraph ${\mathcal G}$ in Algorithm \ref{alg-1} is a quotient digraph of the STG. Thus, every component of ${\mathcal G}$ contains a unique directed cycle or loop.
 Without loss of generality, we focus our analysis on connected ${\mathcal G}$. We further investigate the structural characteristic of ${\mathcal P}(\overline{{\mathcal Z}^{*}})$ through the analysis of $E\pi_{0}$ by considering two distinct cases: when it contains a loop and when ${\mathcal G}$ contains a cycle. 

\subsubsection{The case that the digraph \texorpdfstring{${\mathcal G}$}- contains a loop}
\label{subsectionB-1}


\begin{lemm}
\label{path}
Let $\widetilde{\mathcal{G}}$ be a digraph where each vertex with in-degree of 1. Suppose $\widetilde{\mathcal{G}}$ contains a loop and $v_1$ is the root incident with the loop. For the partition $\pi_{\{v_1\}}:=\{\{v_1\}, V(\widetilde{\mathcal{G}})\setminus \{v_1\}\}$, we have $E\pi_{\{v_1\}}=\{\{v_1\}, N_{in}(v_1,1), \ldots, N_{in}(v_1, \operatorname{dist}_{in}(v_{1}))\}$.

\end{lemm}
\begin{proof}
Let (${\mathcal{G}}$, ${\pi_{0}}$)=SHRINKING($\widetilde{\mathcal{G}}$, $\pi_{\{v_1\}}$). By Lemma \ref{g is quotient}, ${\mathcal{G}}$ corresponds to an equitable partition $\pi:=\{C_1,C_2,\ldots,C_m\}$ of $\widetilde{\mathcal{G}}$.
From Algorithm \ref{alg-1}, we know that $v_1$ is still the root of ${\mathcal{G}}$ and ${\pi_{0}}=\{\{v_1\},V({\mathcal{G}})\setminus\{v_1\}\}$. Since ${\pi_{0}}\preceq \pi$, $v_1$ forms a single cell of $\pi$, denoted as $C_1$. Denote $C_2$ as the cell containing a vertex adjacent to $v_1$. Since $\pi$ is equitable, all vertices in $C_2$ are adjacent to $v_1$. Thus, $C_2\subset N_{in}(v_1,1)$. On the other hand, $N_{in}(v_1,1)$ is contained in one cell of $\pi$ from Algorithm \ref{alg-1}. We now get that $N_{in}(v_1,1)=C_2$. 

By induction on the distance $i$ to $v_1$, it can be readily verified that $N_{in}(v_{1}, i):=C_{i+1}\}$. That is, $\pi:=\{\{v_1\}, N_{in}(v_1,1), \ldots, N_{in}(v_1, \operatorname{dist}_{in}(v_{1}))\}$. Then, ${\mathcal{G}}$, the quotient digraph corresponds to $\pi$, is a path. We conclude that the trivial partition of ${\mathcal{G}}$ is $E{\pi_{0}}$, the coarsest equitable partition finer than ${\pi_{0}}$. Thus, $E\pi_{\{v_1\}}=\pi$.


\end{proof}

\begin{them}
\label{thm-loop}
Suppose (${\mathcal{G}}$, $\pi_{0}$)=SHRINKING(STG, ${\mathcal P}({\mathcal Z}^{*})$). If ${\mathcal G}$ contains a loop, then ${\mathcal G}$ is precisely the quotient digraph of the STG corresponding to ${\mathcal P}(\overline{{\mathcal Z}^{*}})$.
\end{them}

\begin{proof}
Denote the loop contained in ${\mathcal G}$ by $e$. Since $e$ is the unique loop (Lemma \ref{l3.6}) and every state in ${\mathcal G}$ can reach the end $v_1$ of $e$ \citep{che12}, we call $v_1$ the root of ${\mathcal G}$. 

If $|V ({\mathcal G})|>1$, then the root $v_1$ has in-neighbors. Since vertices with the same out-neighbor in ${\mathcal G}$ belong to different cells of $\pi_{0}$, this property is preserved in $E\pi_{0}$ according to $\pi_{0}\preceq E\pi_{0}$. Thus, $v_1$ and its in-neighbors are in different cells of $E\pi_{0}$. 

We first prove that $v_1$ forms a singleton cell of $E\pi_{0}$.

Let $C_1\in E\pi_{0}$ be the cell containing $v_1$. Since $E\pi_{0}$ is equitable, each vertex in the cell $C_1$ must have an out-neighbor in $C_1$, as $v_1$ does. Suppose, for contradiction, that $|C_1|>1$. Let $v_i$ be the vertex closest to $v_1$ in $C_1$. Since $v_1$ and its in-neighbors are in different cells of $\pi_{0}$, it follows that $v_i$ is not an in-neighbor of $v_1$. Thus, the out-neighbor of $v_i$, which is also in $C_1$, is closer than $v_i$. This contradicts the ``closest" property of $v_i$. Therefore, $v_1$ indeed forms a singleton cell of $E\pi_{0}$. It follows that $E\pi_{\{v_1\}}\preceq E\pi_{0}$, where $\pi_{\{v_1\}}:=\{\{v_1\}, V(\widetilde{\mathcal{G}})\setminus \{v_1\}\}$. From Lemma \ref{path}, we have $E\pi_{\{v_1\}}=\{\{v_1\}, N_{in}(v_1,1), \ldots, $ $N_{in}(v_1, \operatorname{dist}_{in}(v_{1}))\}$. We can conclude that the vertices in the same cell of $E\pi_{0}$ are with the same distance to $v_1$ in ${\mathcal G}$. 

We now prove that $E\pi_{0}$ is trivial for ${\mathcal G}$. By contradiction, we suppose $E\pi_{0}$ is non-trivial. Let $C_2$ be the non-singleton cell closest to $v_1$. Consider vertices $v_i$ and $v_j$ in $C_2$. Their out-neighbors must be distinct. By the definition of the equitable partition, these out-neighbors must be in a non-singleton cell, which is closer than $C_2$. However, this contradicts the ``closest" property of $C_2$. Therefore, we conclude that $E\pi_{0}$ is trivial. Consequently, ${\mathcal G}$ is exactly the quotient digraph of the original STG corresponding to the equitable partition ${\mathcal P}(\overline{{\mathcal Z}^{*}})$.
\end{proof}

\begin{exa} \label{e-3}
Consider the following BN:
\begin{equation}
\left\{
\begin{aligned}
x_1(t+1)=&[x_1(t)\wedge x_2(t)]\vee\\
&[\neg x_1(t)\wedge [\neg x_2(t)\vee x_3(t)]],\\
x_2(t+1)=&x_1(t)\vee x_2(t)\vee x_3(t),\\
x_3(t+1)=& x_2(t)\vee[\neg x_1(t) \wedge x_3(t)].
\end{aligned}
\right.
\end{equation}
Its ASSR is calculated as 
\begin{equation}
x(t+1)=Mx(t),
\end{equation}
where
\begin{equation*}
\begin{array}{l l llllll}
M=\delta_{8}[1 &1 &6 &6 &1 &5 &1 &4].
\end{array}    
\end{equation*}
Its STG is depicted in Fig. \ref{fig:example-a}, where $v_1$ is the root. 

Consider the dual subspace ${\mathcal Z}^{*}$ with the structure matrix $\begin{array}{l l llllll}
G=\delta_{4}[1 &2 &3 &3 &2 &1 &1 &4]
\end{array}$. Then, ${\mathcal P}({\mathcal Z}^{*})=\{\{v_1,v_4,v_5\},\{v_2,v_3\},\{v_6,v_7\},\{v_8\}\}$. In Fig. \ref{fig:example-a}, vertices are color-coded to represent the distinct cells of ${\mathcal P}({\mathcal Z}^{*})$.
Upon applying Algorithm \ref{alg-1} to the STG and ${\mathcal P}({\mathcal Z}^{*})$, we shrink $\{v_1,v_4\}$, $\{v_2,v_3\}$ and $\{v_6,v_7\}$ in the STG to new vertices $v_{1,4}$, $v_{2,3}$ and $v_{6,7}$, respectively. Fig. \ref{fig:example-b} shows the resultant digraph ${\mathcal G}$, while the partition $\pi_0=\{\{v_{1,4},v_5\},\{v_{2,3}\},\{v_{6,7}\},\{v_8\}\}$ is obtained from ${\mathcal P}({\mathcal Z}^{*})$. According to Theorem \ref{thm-loop}, ${\mathcal G}$ is exactly the quotient digraph of the original STG corresponding to the equitable partition ${\mathcal P}(\overline{{\mathcal Z}^{*}})$. We obtain
$$
    {\mathcal P}(\overline{{\mathcal Z}^{*}})=\{\{v_1,v_4\},\{v_2,v_3\},\{v_5\},\{v_6,v_7\},\{v_8\}\}.
$$
\end{exa}
\begin{figure}
\centering

\subcaptionbox{The original STG. Here, vertices are color-coded to represent the distinct cells of ${\mathcal P}({\mathcal Z}^{*})$.\label{fig:example-a}}
{%
		\begin{tikzpicture}[->,thick]
				\tikzstyle{every node}=[draw,circle,radius=0.2mm,inner sep=0pt,minimum size=1em];
				\def \n {8}
				\def \radius {0.7cm}
				\def \margin {25}
				\node (v1) [fill=c2]at (3.2, 1){$1$};
				\put(108.5,25){$e$}
				\node (v2)[fill=d4] at (2.0, 1.5){$2$};
				\node (v3)[fill=d4] at (2.0, 1){$3$};
				\node (v4)[fill=c2] at (2.0, 0.5){$4$};
				\node (v5)[fill=c2] at (0.8, 1){$5$};
				\node (v6)[fill=d2] at (-0.4,1.5){$6$};
				\node (v7)[fill=d2] at (-0.4,1){$7$};
				\node (v8)[fill=d1] at (-1.6,1){$8$};
				\path (v8) edge (v7);
				\path (v7) edge (v5);
				\path (v6) edge (v5);
				\path (v5) edge (v3);
				\path (v4) edge (v1);
				\path (v3) edge (v1);
				\path (v2) edge (v1);
				\draw (v1.-60) arc (210:180+300:2mm);
			\end{tikzpicture}
	}
\subcaptionbox{(${\mathcal G}, \pi_0$)=SHRINKING (STG, ${\mathcal P}({\mathcal Z}^{*})$), where  $v_{1,4}$, $v_{2,3}$ and $v_{6,7}$ are obtained by shrinking $\{v_1,v_4\}$, $\{v_2,v_3\}$ and $\{v_6,v_7\}$, respectively. Moreover, $\pi_0=\{\{v_{1,4},v_5\},\{v_{2,3}\},\{v_{6,7}\},\{v_8\}\}$.\label{fig:example-b}}
{%
		\begin{tikzpicture}[->,thick]
				\tikzstyle{every node}=[draw,circle,radius=0.2mm,inner sep=0pt,minimum size=1em];
				\def \n {8}
				\def \radius {0.7cm}
				\def \margin {25}
				\node (v1) [fill=c2,label={[shift={(0.0,-1.0)}]$v_{1,4}$}]at (3.2, 1){};
				\put(105,25){$e$}
				\node (v3)[fill=d4,label={[shift={(0.0,-0.0)}]$v_{2,3}$}] at (2, 1){};
				\node (v5)[fill=c2] at (0.8, 1){$5$};
				\node (v7)[fill=d2,label={[shift={(0.0,-1.0)}]$v_{6,7}$}] at (-0.4,1){};
				\node (v8)[fill=d1] at (-1.6,1){$8$};
				\path (v8) edge (v7);
				\path (v7) edge (v5);
				\path (v5) edge (v3);
				\path (v3) edge (v1);
				\draw (v1.-60) arc (210:180+300:2mm);
			\end{tikzpicture}
	}
\caption{Given a connected STG and a dual subspace ${\mathcal Z}^{*}$, the process of finding ${\mathcal P}(\overline{{\mathcal Z}^{*}})$, the coarsest equitable partition finer than ${\mathcal P}({\mathcal Z}^{*})$ is illustrated, where ${\mathcal P}({\mathcal Z}^{*})=\{\{v_{1},v_{4},v_5\},\{v_{2},v_{3}\},\{v_{6,7}\},\{v_8\}\}$. We finally get ${\mathcal P}(\overline{{\mathcal Z}^{*}})=\{\{v_1,v_7\},\{v_2,v_5\},\{v_6\},\{v_3,v_4\},\{v_8\}\}$, whose quotient digraph is illustrated in (b).}
\label{fig:example}
\end{figure}

\subsubsection{The case that \texorpdfstring{${\mathcal G}$}- contains a cycle}
\label{cycle}
Now, we consider the case where the simplified ${\mathcal G}$, obtained from applying Algorithm \ref{alg-1} to the STG and ${\mathcal P}({\mathcal Z}^{*})$, contains a cycle ${\mathcal C}$. 

We first focus on the equitable partitions of cycles.

\begin{them}{\label{l3.10}}
	Assume that ${\mathcal C}$ is a directed cycle of length $l$ and $v_1\in V({\mathcal C})$. A partition $\pi$ of ${\mathcal C}$ is equitable iff there exists a factor $q$ of $l$ such that $\pi=\{\{v_i | \operatorname{dist}(v_i\rightarrow v_1)\equiv j(\!\!\!\!\mod q)\}|j\in[0;q-1]\}$. Moreover, the quotient digraph ${\mathcal C}/\pi$ is a directed cycle of length $q$.
\end{them}
\begin{proof}
	It is easy to show that $\pi=\{\{v_i | \operatorname{dist}(v_i\rightarrow v_1)\equiv j(\!\!\!\!\mod q)\} | j\in[0;q-1]\}$ is equitable for any $q|l$. Its quotient digraph ${\mathcal C}/\pi$ is a directed $q$-cycle.
	
	Suppose $\pi=\{C_1,C_2,\ldots,C_q\}$ is a nontrivial equitable partition of ${\mathcal C}:v_1v_2\cdots v_l$. Assume $v_1\in C_1$ and its out-neighbor $v_2\in C_1$. By the definition of equitable partitions, their respective out-neighbors $v_2$ and $v_3$ must also be in the same cell. Thus, $v_3 \in C_1$. Proceeding inductively, we ultimately get $C_1=V({\mathcal C})$. 
	
	Assume $\pi$ is nontrivial and $V({\mathcal C})\notin\pi$. Then, by our previous conclusion, no two adjacent vertices can be in the same cell of $\pi$. As we concluded after Lemma \ref{l3.6}, the quotient digraph ${\mathcal C}/\pi$ is unicyclic, which is a cycle since $\pi$ contains multiple cells. Denoted ${\mathcal C}/\pi$ as $C_1C_2\cdots C_q$. In ${\mathcal C}/\pi$, the out-neighbors of vertices in $C_{i}$ are contained in $C_{i+1}$, $i\in [1;q-1]$ and the out-neighbors of vertices in $C_q$ are contained in $C_1$. We get that $\pi=\{\{v_i | \operatorname{dist}(v_i\rightarrow v_1)\equiv j(\!\!\!\!\mod q)\}| j\in[0;q-1]\}$ and $|C_1|=|C_2|=\cdots=|C_q|=l/q$, which follows $q|l$.
\end{proof}

Depending on whether ${\mathcal G}$ is a single cycle or not, we further discuss the structures of $E\pi_{0}$ in two cases.

\noindent{\bf Case 1.} The digraph ${\mathcal G}$ is a single cycle.

(1) If there exists a proper factor $q$ of $l$ such that $\pi_0$ is coarser than the equitable $q$-partition $E\pi:=\{\{v | \operatorname{dist}(v\rightarrow v_1)\equiv j(\mod q)\}|j\in [0;q-1]\}$, i.e., $\pi_0\preceq E\pi$, then $E\pi_0\preceq E\pi$ and $E\pi_0$ is non-trivial. In the quotient $q$-cycle ${\mathcal G}_1:={\mathcal G}/E\pi$, let $\pi_1:=\pi_0/E\pi$ be the quotient partition of $\pi_0$ induced by $E\pi$. We can determine the quotient cycle of ${\mathcal G}$ corresponding to $E\pi_0$ by determining the quotient cycle of ${\mathcal G}_1$ corresponding to $E\pi_{1}$. (2) Otherwise, $E\pi_0$ is trivial for ${\mathcal G}$. Thus, ${\mathcal G}$ is the quotient digraph of the STG corresponding to ${\mathcal P}(\overline{{\mathcal Z}^{*}})$. This iterative process is formalized in Algorithm \ref{alg-2}.

\begin{algorithm}
	\caption{The algorithm for determining the quotient digraph of the STG corresponding to ${\mathcal P}(\overline{{\mathcal Z}^{*}})$ when ${\mathcal{G}}$ is a cycle.}
	\label{alg-2}
	\begin{algorithmic}[1]
		\Function {Cycle}{${\mathcal{G}}$, $\pi_0$}
		\If {there exists a proper factor $q$ of $|{\mathcal{G}}|$ such that $\pi_0\preceq E\pi:=\{\{v | \operatorname{dist}(v\rightarrow v_1)\equiv j(\!\!\!\!\mod q)\}|j\in [0;q-1]\}$}
        \State {${\mathcal{G}}\gets {\mathcal{G}}/E\pi$} 
        \State {$\pi_0\gets\pi_0/E\pi$, which is the quotient partition of $\pi_0$ induced by $E\pi$}
        \State\Return {CYCLE(${\mathcal{G}}$, $\pi_0$)}
        \Else 
        \State\Return {(${\mathcal{G}}$, $\pi_0$)} \Comment{${\mathcal{G}}$ is the quotient digraph.}
        \EndIf
		\EndFunction
	\end{algorithmic}
\end{algorithm}

\noindent{\bf Case 2.} The digraph ${\mathcal G}$ is unicyclic, containing a unique cycle ${\mathcal C}$ of length $l$.

Case 2.1. All vertices in ${\mathcal C}$ are partitioned into one cell of $\pi_{0}$. 

 During the execution of Algorithm 3.11 \citep{che21}, all vertices in ${\mathcal C}$ always produce the same in $z_i(x)\sim G_i{\bar x}$ for all $i\in[0;k]$ since $V({\mathcal C})$ is a subset of a single cell in $\pi_{0}$. As a result, $V({\mathcal C})$ is contained within a single cell of $E\pi_{0}$. Given this observation, we shrink $V({\mathcal C})$ in ${\mathcal G}$ to a new vertex, denoted as $v$, which is incident with a loop. In ${\mathcal G}/V({\mathcal C})$, $\pi_{0}/V({\mathcal C})$ is the quotient partition induced from $\pi_{0}$.
 By Theorem \ref{thm-loop}, the quotient digraph of the original STG corresponding to ${\mathcal P}(\overline{{\mathcal Z}^{*}})$ is obtained from SHRINKING(${\mathcal G}/V({\mathcal C})$,$\pi_{0}/V({\mathcal C})$).

Case 2.2. Consider the case where $V({\mathcal C})$ is partitioned into distinct cells of $\pi_{0}$. The induced partition $\pi'_{0}$ on the vertices of ${\mathcal C}$ is defined as: $\pi'_{0} := \{C_i \cap V({\mathcal C}) \mid \forall C_i \in \pi_{0}, C_i \cap V({\mathcal C}) \neq \emptyset\}$.

(1) If $E\pi'_{0}$, the equitable partition of ${\mathcal C}$ generated by $\pi'_{0}$, is trivial, then ${\mathcal P}(\overline{{\mathcal Z}^{*}})$ exhibits specific characteristics as detailed in the following theorem.

\begin{them}
\label{the-5}
If $E\pi'_{0}$ is trivial, then $E\pi_{0}$ is trivial and ${\mathcal G}$ is the quotient digraph of the STG corresponding to ${\mathcal P}(\overline{{\mathcal Z}^{*}})$.
\end{them}
\begin{proof}
We first prove that the vertices in $V({\mathcal C})$ form singleton cells in $E\pi_{0}$. 

Given that the partition of a vertex is solely determined by the set of vertices within its reachability, the partition of ${\mathcal C}$ in $E\pi_{0}$ is consistent with the equitable partition $E\pi'_{0}$ of ${\mathcal C}$ generated by $\pi'_{0}$. Since $E\pi'_{0}$ is trivial, each vertex in ${\mathcal C}$ belongs to a distinct cell of $E\pi_{0}$. Let $C_1,C_2,\ldots,C_l$ denote the cells in $E\pi_{0}$ containing the vertices in $V({\mathcal C})$, respectively.  
In the quotient digraph ${\mathcal G}/E\pi_{0}$, the subgraph induced by the vertices corresponding to $C_1, C_2,\ldots,C_l$ is also an $l$-cycle.

By contradiction, we suppose that a vertex $v_i\in V({\mathcal C})$ shares a cell in $E\pi_{0}$ with a vertex $v_{j}\notin V({\mathcal C})$. Let $v'$ be the vertex in $V({\mathcal C})$ closest to $v_j$. On the path from $v_j$ to $V'$, let $v_{j+1}$ be the in-neighbor of $v'$ (refer to Figure \ref{fig:enter-label}). Thus, $v_{j+1}\notin V({\mathcal C})$. By the definition of ${\mathcal G}$ and $\pi_0$, the out-neighbors of $v_i$ and $v_j$ are distinct. We know that their out-neighbors belong to the same cell in $E\pi_{0}$ by the definition of equitable partition. Furthermore, there exists $v_{i+1}\in V({\mathcal C})$ such that each vertex in the path from $v_i$ to $v_{i+1}$ shares a cell in $E\pi_{0}$ with the a vertex in the path from $v_j$ to $v_{j+1}$. As a result, $v_{i+1}$ and $v_{j+1}$ belong to the same cell in $E\pi_{0}$. 

Given that the in-neighbors of $v'$ are in different cells of $\pi_{0}$ and $E\pi_{0}$, the out-neighbor of $v_{i+1}$ cannot be $v'$. The distinct out-neighbors of $v_{i+1}$ and $v_{j+1}$, both in $V({\mathcal C})$, must belong to the same cell of $E\pi_{0}$. This contradicts our premise that the vertices in ${\mathcal C}$ are in distinct cells of $E\pi_{0}$. 
Therefore, we conclude that each vertex in $V({\mathcal C})$ must form a singleton cell in $E\pi_{0}$.

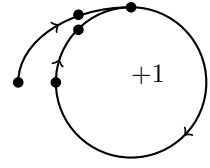
\begin{figure}
    \centering
   \begin{tikzpicture}[style=thick]
\coordinate (v) at (0,1);
\coordinate (vj) at (-1.5,0);
\coordinate (vi) at (-1,0);
\coordinate (vj+1) at (-0.7,0.9);
\coordinate (vi+1) at (-0.7,0.7);
\fill (v) circle (2pt);
\fill (vj) circle (2pt);
\fill (vi) circle (2pt);
\fill (vj+1) circle (2pt);
\fill (vi+1) circle (2pt);
\draw[thick, postaction={decorate, decoration={
    markings,
    mark=at position 0.5 with {\arrow{>}}
}}] (vj) arc[start angle=180, end angle=90, x radius=1.5cm, y radius=1cm];
\draw[thick, postaction={decorate, decoration={
    markings,
    mark=at position 0.2 with {\arrow{>}}
}}] (vi) arc[start angle=-180, end angle=-270, radius=1cm];
\draw[thick, postaction={decorate, decoration={
    markings,
    mark=at position 0.5 with {\arrow{>}}
}}] (v) arc[start angle=90, end angle=-180, radius=1cm];

\put(0,32){$v'$}
\put(-55,0){$v_j$}
\put(-26,0){$v_i$}
\put(-25,32){$v_{j+1}$}
\put(-15,18){$v_{i+1}$}
\end{tikzpicture}
    \caption{Proof of Theorem \ref{the-5}}
    \label{fig:enter-label}
\end{figure}

We now prove that each vertex not in $V({\mathcal C})$ also forms a singleton cell in $E\pi_{0}$.

Suppose, for contradiction, that two vertices $v_i, v_j \notin V({\mathcal C})$ belong to the same cell in $E\pi_{0}$. Then, for all $k\in[1;\operatorname{dist}^*_{out}(v_{i})]$, $N_{out}(v_{i}, k)$ and $N_{out}(v_{j}, k)$, which are distinct, must belong to the same cell in $E\pi_{0}$. Without loss of generality, there exists a $k$ such that at least one of $N_{out}(v_{i}, k)$ and $N_{out}(v_{j}, k)$ is in $V({\mathcal C})$. However, this contradicts our previous conclusion that each vertex in $V({\mathcal C})$ forms a singleton cell of $E\pi_{0}$.

\end{proof}

(2) If $E\pi'_{0}$ is non-trivial, we can first shrink all the cells of $E\pi'_{0}$ in the original digraph ${\mathcal G}$. In other words, we replace the original cycle ${\mathcal C}$ with its quotient cycle ${\mathcal C}/E\pi'_{0}$. Denote the resulting digraph by ${\mathcal G}_1$. In ${\mathcal G}_1$, let $\pi_1$ be the quotient partition of $\pi_0$ obtained by shrinking the vertex subset corresponding to all cells of $E\pi'_{0}$. After applying the SHRINKING operation to ${\mathcal G}_1$ and $ \pi_1$, we obtain ${\mathcal G}_2$ and $ \pi_2$.
Here, ${\mathcal G}_2$ and $ \pi_2$ satisfy the condition in Theorem \ref{the-5}. It follows that $E\pi_2$ is trivial and ${\mathcal G}_2$ is the quotient digraph of ${\mathcal P}(\overline{{\mathcal Z}^{*}})$.

In conclusion, Algorithm \ref{a3.2.1} is summarized for determining the quotient digraph of the STG corresponding to ${\mathcal P}(\overline{{\mathcal Z}^{*}})$.
\begin{algorithm}
	\caption{An algorithm for computing the smallest $M$-invariant dual subspace containing a given dual subspace when the STG is connected.}
	\label{a3.2.1}
	\noindent{\bf Input:}
	STG ${\mathcal{G}}$, dual subspace ${\mathcal Z}^{*}$\\ 
	\noindent{\bf Output:} ${\mathcal P}(\overline{{\mathcal Z}^{*}})$ and its quotient digraph 
	\begin{algorithmic}[1]
		\Ensure The STG is connected
        \Function {Partition}{${\mathcal{G}}$, ${\mathcal P}({\mathcal Z}^{*})$}
        \State {$({\mathcal{G}},\!\pi_0)\!\gets\!$ SHRINKING$({\mathcal{G}}$,${\mathcal P}({\mathcal Z}^{*}))$ \!}
        \Comment{Algorithm \ref{alg-1}.}
        \If {${\mathcal{G}}$ contains a loop}
        \State\Return {$({\mathcal{G}}, \pi_0)$}
        \Else \Comment{${\mathcal{G}}$ contains a cycle}
        \State{${\mathcal{C}}\gets$ the unique cycle in ${\mathcal{G}}$}
        \If {${\mathcal{G}}={\mathcal{C}}$} \Comment{Case 1.}
        \State\Return {CYCLE(${\mathcal{G}}$, $\pi_0$)} \Comment{Algorithm \ref{alg-2}.}
        \Else \Comment{Case 2.}
        \If {$V({\mathcal{C}})$ is a subset of one cell of $\pi_0$}\Comment{Case 2.1.}
        \State\Return {SHRINKING$({\mathcal{G}}/V({\mathcal{C}}),\pi_0/V({\mathcal{C}}))$}
        \Else \Comment{Case 2.2.}
        \State {$\pi'_{0}\gets$ the induced partition on $V({\mathcal C})$ from $\pi_{0}$} 
        \State {$({\mathcal{C}_1}, E\pi'_{0})\gets$ CYCLE$({\mathcal{C}}, \pi'_{0})$}
        
        \If {${\mathcal{C}_1}={\mathcal{C}}$} \Comment{Case 2.2 (1). }
        \State\Return $({\mathcal{G}}, \pi_0)$
        \Else \Comment{Case 2.2 (2). }
        \State {${\mathcal{G}}_1\gets$ replacing ${\mathcal{C}}$ in ${\mathcal{G}}$ by ${\mathcal{C}_1}$}
        \State {$\pi_1\gets$ the quotient partition of $\pi_0$ obtained by shrinking the vertex subset corresponding to all the cells of $E\pi'_0$}
        \State\Return SHRINKING$({\mathcal{G}}_1,\pi_1)$
        \EndIf
        \EndIf
        \EndIf
        \EndIf
        \EndFunction
	\end{algorithmic}
\end{algorithm}

In the following, we revisit the procedure in Algorithm \ref{a3.2.1} for the case where ${\mathcal G}$ contains a cycle through a concrete example.
\begin{exa}
\label{e-4}
Consider the following BN:
\begin{equation}
\left\{
\begin{aligned}
x_1(t+1)=&x_1(t)\vee[x_2(t)\wedge x_3(t)] ,\\
x_2(t+1)=&[x_1(t)\wedge[[x_2(t)\wedge[x_3(t)\vee x_4(t)]]\\
&\vee\neg[x_2(t)\vee[\neg x_3(t)\vee x_4(t)]]\vee\\
&[\neg x_2(t)]]]\vee\\
&[\neg x_1(t)\wedge[[\neg[x_2(t)\wedge x_3(t)]\\
&\wedge[x_2(t)\vee[x_3(t)\wedge x_4(t)]]]]],\\
x_3(t+1)=&[x_1(t)\wedge
[
[x_2(t)\wedge [
[x_3(t)\wedge x_4(t)]\vee\\
&\neg[x_3(t)\vee x_4(t)]
]]
\vee\\
&
[\neg x_2(t)\wedge
[x_3(t)\vee x_4(t)]
]
]
]
\vee
\\
&[\neg x_1(t)\wedge
[
[x_2(t)\wedge[x_3(t)\vee x_4(t)]
]\\
&\vee[\neg x_2(t)\wedge[\neg
[x_3(t)\wedge x_4(t)]\\
&\wedge[x_3(t)\vee x_4(t)]
]
]
]
],\\
x_4(t+1)=&[\neg x_1(t)\wedge x_2(t)\wedge x_3(t)]\vee \neg x_4(t).\\
\end{aligned}
\right.
\end{equation}
Its ASSR is 
$$x(t+1)=Mx(t),$$
where
\begin{equation*}
\begin{array}{ccccccccc}
M=\delta_{16}[&2 &3 &4 &5 &6 &1 &6 &7\\
&5 &5 &10 &11 &12 &13 &14 &15].
\end{array}    
\end{equation*}
Its STG is shown in Fig. \ref{1-a}. Note that the dotted edge between $v_{16}$ and $v_{12}$ means the path from $v_{16}$ to $ v_{12}$. 

Consider dual subspace ${\mathcal Z}^{*}$ with the structure matrix $
\begin{array}{cccccccccccccccccc}
G=\delta_{4}[&1 &2 &3 &1 &2 &3 &2 &1
&3 &3 &4 &4 &4 &4 &4 &4].
\end{array}
$
It follows that ${\mathcal P}({\mathcal Z}^{*})=\{\{v_1,v_4,v_8\},\{v_2,v_5,v_7\},\{v_3,v_6,v_9,v_{10}\},$ $\{v_{11},\ldots,v_{16}\}\}$. In Fig. \ref{1-a}, vertices are color-coded to represent the distinct cells of ${\mathcal P}({\mathcal Z}^{*})$.

As outlined in Algorithm \ref{a3.2.1}, we initially apply the SHRINKING operation to the original STG and ${\mathcal P}({\mathcal Z}^{*})$. The resulting ${\mathcal G}$ and $\pi_0$ are illustrated in Fig. \ref{1-b}. 
In ${\mathcal G}$, vertices $v_{4,8}$, $v_{5,7}$ and $v_{9,10}$ are obtained by shrinking $\{v_4,v_8\}$, $\{v_5,v_7\}$ and $\{v_9,v_{10}\}$, respectively. Moreover, $\pi_0=\{\{v_1,v_{4,8}\},\{v_2,v_{5,7}\},\{v_3,v_6,v_{9,10}\},\{v_{11},\ldots,v_{16}\}\}$. Here, ${\mathcal G}$ contains a 6-cycle ${\mathcal C}$.

Analogous to Case 2.2, vertices in ${\mathcal C}$ are partitioned into distinct cells in $\pi_{0}$. The partition induced by $V({\mathcal C})$ from $\pi_{0}$ is
$\pi'_0=\{\{v_1,v_{4,8}\},\{v_2,v_{5,6}\},\{v_3,v_6\}\}$. Based on Algorithm \ref{alg-2}, the quotient digraph ${\mathcal C}/E\pi'_0$ is a 3-cycle with vertices $v_{1,4,8}$, $v_{2,5,7}$ and $v_{3,6}$. By replacing the ${\mathcal C}$ with ${\mathcal C}/E\pi'_0$, we obtain ${\mathcal G}_1$ and $\pi_{1}$, as illustrated in Fig \ref{1-d}. 

Given that (${\mathcal G}_1, \pi_{1}$)= SHRINKING(${\mathcal G}_1, \pi_{1}$), we conclude that ${\mathcal G}_1$ is the quotient digraph of the equitable partition ${\mathcal P}(\overline{{\mathcal Z}^{*}})$.
Thus,
$$\begin{aligned}
    {\mathcal P}(\overline{{\mathcal Z}^{*}})=&\{\{v_{1},v_{4},v_{8}\},\{v_3,v_6\},\{v_{2},v_{5},v_{7}\},\{v_9,v_{10}\},\\
&\{v_{11}\},\ldots,\{v_{16}\}\}.
    \end{aligned}$$
\end{exa}	
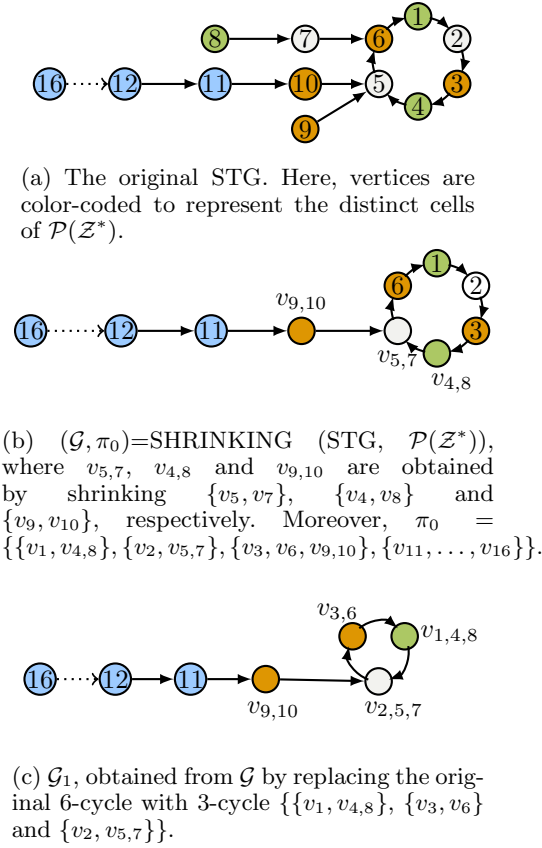
\begin{figure}
	\centering
	\setlength{\unitlength}{2mm}
	\subcaptionbox{The original STG. Here, vertices are color-coded to represent the distinct cells of ${\mathcal P}({\mathcal Z}^{*})$.\label{1-a}}
	{
		\begin{tikzpicture}[->,thick]
			\tikzstyle{every node}=[draw,circle,radius=0.2mm,inner sep=0pt,minimum size=1em];
			\def \n {12}
			\def \radius {0.6cm}
			\def \margin {45} 
			\node (v2)[fill=d4] at ({360/6 * (1 - 1)+30}:\radius) {$2$};
			\node (v1)[fill=d2] at ({360/6 * (2 - 1)+30}:\radius) {$1$};
			\node (v6)[fill=d1] at ({360/6 * (3 - 1)+30}:\radius) {$6$};
			\node (v5)[fill=d4] at ({360/6 * (4 - 1)+30}:\radius) {$5$};
			\node (v4)[fill=d2] at ({360/6 * (5 - 1)+30}:\radius) {$4$};
			\node (v3)[fill=d1] at ({360/6 * (6 - 1)+30}:\radius) {$3$};
			\node (v7)[fill=d4] at (-1.5,0.3) {$7$};
			\node (v8)[fill=d2] at (-2.7, 0.3) {$8$};
			\node (v9)[fill=d1] at (-1.5,-0.9) {$9$};
			\node (v10)[fill=d1] at (-1.5,-0.3) {$10$};
			\node (v11)[fill=c2] at (-2.7, -0.3) {$11$};
			\node (v12)[fill=c2] at (-3.9,-0.3) {$12$};
			\node (v16)[fill=c2] at (-4.9,-0.3) {$16$};
			
			\draw[->, >=latex] ({360/6 * (1 - 1)-30+\margin}:\radius)
			arc ({360/6 * (1 - 1)-30+\margin}:{360/6 * (1)-30-\margin}:\radius);
			\draw[->, >=latex] ({360/6 * (2- 1)-30+\margin}:\radius)
			arc ({360/6 * (2 - 1)-30+\margin}:{360/6 * (2)-30-\margin}:\radius);
			\draw[->, >=latex] ({360/6 * (3 - 1)-30+\margin}:\radius)
			arc ({360/6 * (3 - 1)-30+\margin}:{360/6 * (3)-30-\margin}:\radius);
			\draw[->, >=latex] ({360/6 * (4 - 1)-30+\margin}:\radius)
			arc ({360/6 * (4 - 1)-30+\margin}:{360/6 * (4)-30-\margin}:\radius);
			\draw[->, >=latex] ({360/6 * (5 - 1)-30+\margin}:\radius)
			arc ({360/6 * (5 - 1)-30+\margin}:{360/6 * (5)-30-\margin}:\radius);
			\draw[->, >=latex] ({360/6 * (6 - 1)-30+\margin}:\radius)
			arc ({360/6 * (6 - 1)-30+\margin}:{360/6 * (6)-30-\margin}:\radius);
			\path[->, >=latex] (v7) edge (v6);
			\path[->, >=latex] (v8) edge (v7);
			\path[->, >=latex] (v9) edge (v5);
			\path[->, >=latex] (v10) edge (v5);
			\path[->, >=latex] (v11) edge (v10);
			\path[->, >=latex] (v12) edge (v11);
			\path[dotted] (v16) edge (v12);
		\end{tikzpicture}
	}
 
	\subcaptionbox{(${\mathcal G}, \pi_0$)=SHRINKING (STG, ${\mathcal P}({\mathcal Z}^{*})$), where  $v_{5,7}$, $v_{4,8}$ and $v_{9,10}$ are obtained by shrinking $\{v_5,v_7\}$, $\{v_4,v_8\}$ and $\{v_9,v_{10}\}$, respectively. Moreover, $\pi_0=\{\{v_1,v_{4,8}\},\{v_2,v_{5,7}\},\{v_3,v_6,v_{9,10}\},\{v_{11},\ldots,v_{16}\}\}$. \label{1-b}}
	{
		\begin{tikzpicture}[->,thick]
			\tikzstyle{every node}=[draw,circle,radius=0.2mm,inner sep=0pt,minimum size=1em];
			\def \n {12}
			\def \radius {0.6cm}
			\def \margin {45} 
			\node (v2)[] at ({360/6 * (1 - 1)+30}:\radius) {$2$};
			\node (v1)[fill=d2] at ({360/6 * (2 - 1)+30}:\radius) {$1$};
			\node (v6)[fill=d1] at ({360/6 * (3 - 1)+30}:\radius) {$6$};
			\node (v5)[fill=d4,label={[shift={(0.0,-0.9)}]$v_{5,7}$}] at ({360/6 * (4 - 1)+30}:\radius) {};
			\node (v4)[fill=d2,label={[shift={(0.2,-0.9)}]$v_{4,8}$}] at ({360/6 * (5 - 1)+30}:\radius) {};
			\node (v3)[fill=d1] at ({360/6 * (6 - 1)+30}:\radius) {$3$};
			\node (v10)[fill=d1,label={[shift={(0.0,-0.2)}]$v_{9,10}$}] at (-1.8,-0.3) {};
			\node (v11)[fill=c2] at (-3, -0.3) {$11$};
			\node (v12)[fill=c2] at (-4.2,-0.3) {$12$};
			\node (v16)[fill=c2] at (-5.4,-0.3) {$16$};
			
			\draw[->, >=latex] ({360/6 * (1 - 1)-30+\margin}:\radius)
			arc ({360/6 * (1 - 1)-30+\margin}:{360/6 * (1)-30-\margin}:\radius);
			\draw[->, >=latex] ({360/6 * (2- 1)-30+\margin}:\radius)
			arc ({360/6 * (2 - 1)-30+\margin}:{360/6 * (2)-30-\margin}:\radius);
			\draw[->, >=latex] ({360/6 * (3 - 1)-30+\margin}:\radius)
			arc ({360/6 * (3 - 1)-30+\margin}:{360/6 * (3)-30-\margin}:\radius);
			\draw[->, >=latex] ({360/6 * (4 - 1)-30+\margin}:\radius)
			arc ({360/6 * (4 - 1)-30+\margin}:{360/6 * (4)-30-\margin}:\radius);
			\draw[->, >=latex] ({360/6 * (5 - 1)-30+\margin}:\radius)
			arc ({360/6 * (5 - 1)-30+\margin}:{360/6 * (5)-30-\margin}:\radius);
			\draw[->, >=latex] ({360/6 * (6 - 1)-30+\margin}:\radius)
			arc ({360/6 * (6 - 1)-30+\margin}:{360/6 * (6)-30-\margin}:\radius);
			\path[->, >=latex] (v10) edge (v5);
			\path[->, >=latex] (v11) edge (v10);
			\path[->, >=latex] (v12) edge (v11);
			\path[dotted] (v16) edge (v12);
		\end{tikzpicture}
	}

	\subcaptionbox{${\mathcal G}_1$, obtained from ${\mathcal G}$ by replacing the original 6-cycle with 3-cycle $\{\{v_1,v_{4,8}\}$, $\{v_3,v_6\}$ and $\{v_2,v_{5,7}\}\}$.\label{1-d}}
	{
		\begin{tikzpicture}[->,thick]
			\tikzstyle{every node}=[draw,circle,radius=0.2mm,inner sep=0pt,minimum size=1em];
			\def \n {12}
			\def \radius {0.4cm}
			\def \margin {20} 
			\node  (v1)[fill=d2,label={[shift={(0.6,-0.6)}]$v_{1,4,8}$}] at ({360/3 * (1 - 1)+30}:\radius) {};
			\node (v3)[fill=d1,label={[shift={(-0.2,-0.2)}]$v_{3,6}$}] at ({360/3 * (2 - 1)+30}:\radius) {};
			\node (v2)[fill=d4,label={[shift={(0.2,-1)}]$v_{2,5,7}$}] at ({360/3 * (3 - 1)+30}:\radius) {};
			\node (v10)[fill=d1,label={[shift={(0.1,-1)}]$v_{9,10}$}] at (-1.5,-0.37) {};
			\node (v11)[fill=c2] at (-2.5, -0.37) {$11$};
			\node (v12)[fill=c2] at (-3.5,-0.37) {$12$};
			\node (v16)[fill=c2] at (-4.5,-0.37) {$16$};
			
			\draw[->, >=latex] ({30-\margin}:\radius)
			arc ({30-\margin}:{-90+\margin}:\radius);
			\draw[->, >=latex] ({-90-\margin}:\radius)
			arc ({-90-\margin}:{-210+\margin}:\radius);
			\draw[->, >=latex] ({150-\margin}:\radius)
			arc ({150-\margin}:{30+\margin}:\radius);
			\path[->, >=latex] (v10) edge (v2);
			\path[->, >=latex] (v11) edge (v10);
			\path[->, >=latex] (v12) edge (v11);
			\path[dotted] (v16) edge (v12);
		\end{tikzpicture}
	}
	\caption{Given a connected STG and a dual subspace ${\mathcal Z}^{*}$, the process of finding ${\mathcal P}(\overline{{\mathcal Z}^{*}})$, the coarsest equitable partition finer than ${\mathcal P}({\mathcal Z}^{*})$ is illustrated, where ${\mathcal P}({\mathcal Z}^{*})=\{\{v_1, v_4, v_8\},\{v_2,v_5,v_7\},\{v_3,v_6,v_9,v_{10}\},\{v_{11},\ldots ,v_{16}\}\}$. The dotted edge between $v_{16}$ and $v_{12}$ means the path from $v_{16}$ to $v_{12}$. We finally get ${\mathcal P}(\overline{{\mathcal Z}^{*}})=\{\{v_{1},v_{4},v_{8}\},\{v_3,v_6\},\{v_{2},v_{5},v_{7}\},\{v_9,v_{10}\},\{v_{11}\},$ $\ldots,\{v_{16}\}\}$, whose quotient digraph is illustrated in (c).\label{fig-5}}
\end{figure}

\begin{remr}
\label{r6}
 In subsections \ref{subsectionB-1} and \ref{cycle}, we present Algorithm \ref{a3.2.1} for determining ${\mathcal P}(\overline{{\mathcal Z}^{*}})$, the coarsest equitable partition finer than ${\mathcal P}({\mathcal Z}^{*})$, where ${\mathcal Z}^{*}$ is a given dual subspace. From Algorithm \ref{a3.2.1}, we can observe that the resulting quotient digraph corresponding to ${\mathcal P}(\overline{{\mathcal Z}^{*}})$ is either the result of a SHRINKING operation or a directed cycle (as specified in line 8 of Algorithm \ref{a3.2.1}). Consequently, we can conclude that the resulting quotient digraph satisfies two conditions: 1. The in-degree of each vertex does not exceed the cardinality of ${\mathcal P}({\mathcal Z}^{*})$. 2. If a vertex has multiple in-neighbors, then these in-neighbors belong to different cells in $\mathcal{P}(\mathcal{Z}^*)$.
\end{remr}

\section{Construction of observability outputs for a given BN}\label{IV}

Observability is a fundamental property in control theory. It provides the foundation for numerous related control problems, including state estimation, identification, disturbance decoupling, controller synthesis, etc. In the context of Boolean Control Networks (BCNs), there are primarily three methods for verifying observability, which are mathematically equivalent: Moore's partition-based method \citep{moor56}, the observability-graph method \citep{Zhang2014ObservabilityofBCNCCC,Zhang2016ObservabilityofBCN}, and an algebraic-variety-based method \citep{Li2015ControlObservaBCN}. 
For a comprehensive and recent survey on the observability of BCNs, readers are referred to \citep{Zhang2023SurveyObservabilityBCN}. 
Among these three methods, the second is the most widely used. By using the observability graph or its adjacency matrix, further results on the observability of BCNs were obtained
\citep{Cheng2016NoteonObservabilityBCN,Zhu2018ObservabilityBCN,Cheng2018ObservabilityBCNSetControl,Guo2018ObservabilityBCN,Zhang2021ObservabilityBCNSetControl};
observability verification results were extended from BCNs to probabilistic BCNs \citep{Zhou2019ObservabilityPBN,Yu2021ObservabilityBooleanNetwork} and stochastic labeled graphs \citep{Zhu2023ObserDetecStcoLabelGraph}; minimal observability \citep{Liu2022MinimalObservabilityBCN,xu24} and
observability perturbation analysis \citep{Wang2021PerburbationAnaObservabilityBCN} in BCNs were also investigated, just to name a few. Moreover, a slight variant of the observability graph was used to verify reconstructibility (also called detectability) of BCNs \citep{Zhang2016WPGRepresentationReconBCN} and of singular BCNs \citep{Li2020ReconstructibilitySBCN}.

Moore's partition was used to verify observability of BCNs in \citep{Fornasini2013ObservabilityReconstructibilityofBCN,guo18}.
Coincidentally, the method used for solving the disturbance decoupling problem of BCNs almost coincides with Moore's partition \citep{Li2012ControllabilitySwitchBCN}; the $M$-invariant dual subspaces of BNs generated by a set of output functions \citep{che21} also coincide with Moore's partition.
\subsection{Unobservable subspaces and the smallest \texorpdfstring{$M$}--invariant dual subspaces}
A BN is described as the following algebraic form \citep{che11}
\begin{equation}\label{BN-out}
\left\{
\begin{aligned}
{\bar x}(t+1)&=M{\bar x}(t), &{\bar x}(t)\in\Delta_{2^n}, \\
{\bar y}(t)&=E{\bar x}(t),&{\bar y}(t)\in\Delta_{2^q},
\end{aligned}
\right.
\end{equation}
where $M\in {\mathcal L}_{2^n\times 2^n}$ and $E\in {\mathcal L}_{2^q\times 2^n}$ are the state-transition and output matrices, respectively. The solution to BN (\ref{BN-out}) with initial state ${\bar x}_0\in\Delta_{2^n}$ at setp $t$ is denoted by ${\bar x}(t;{\bar x}_0)$. The output is denoted by ${\bar y}(t;{\bar x}_0)$, that is, ${\bar y}(t;{\bar x}_0)=E{\bar x}(t;{\bar x}_0)$. For convenience, we define  $\mathbf{y}(t ; x_{0}):={\bar y}(0 ; {\bar x}_{0}) \ltimes {\bar y}(1 ; {\bar x}_{0}) \ltimes \cdots \ltimes {\bar y}(t ; {\bar x}_{0})$.

Two distinct initial states ${\bar x}_0$ and ${\bar x}'_0$ are said to be distinguishable if there exists a positive integer $t$ such that $\mathbf{y}(t; {\bar x}_0)\neq \mathbf{y}(t; {\bar x}'_0)$. BN (\ref{BN-out}) is said to be observable if any two distinct initial states are distinguishable.
		
Denote $\mathcal{O}_r^*:=E*(EM)*\cdots *(EM^{r-1}).$ Then, $\mathbf{y}(r ; x_{0})=\mathcal{O}_r^*{\bar x}_0$.
Let the observability matrix be $\mathcal{O}_{r_0}^*$, where $r_0\!=\!\min\!\left\{\!r\!\mid\! \operatorname{rank}\left(\mathcal{O}_{r}^{*}\right)=\operatorname{rank}\left(\mathcal{O}_{r+1}^{*}\right)\!\right\}$.

\begin{lemm}[\!\!\citep{guo18}]\label{th4.1}
In BN $(\ref{BN-out})$, two distinct states ${\bar x}_0$ and ${\bar x}'_0$ are distinguishable iff $\mathcal{O}_{r_0}^*{\bar x}_0\neq\mathcal{O}_{r_0}^*{\bar x}'_0$. Moreover, BN $(\ref{BN-out})$ is observable iff no two columns of $\mathcal{O}_{r_0}^*$ are identical.
\end{lemm}
		
\begin{remr}
It can be seen that Lemma \ref{th4.1} is a special case of \citep[Theorem 6]{moor56}. Theorem 6 of \citep{moor56} was briefly restated in \citep[Remark 4.1]{zhang20} and \citep[Theorem  6]{Zhang2023SurveyObservabilityBCN}.
\end{remr}	
\begin{them}
\label{ovservability}
For BN $(\ref{BN-out})$, let $\overline{{\mathcal Z}^{*}}$ be the smallest $M$-invariant dual subspace generated by ${\mathcal Z}^{*}:={\mathcal F}_{\ell}\{y(t)\}$, where $y(t)\sim {\bar y}(t)$. Two states are distinguishable iff they are in different cells of ${\mathcal P}(\overline{{\mathcal Z}^{*}})$. Moreover, BN (\ref{BN-out}) is observable iff ${\mathcal P}(\overline{{\mathcal Z}^{*}})$ is trivial.
\end{them}
\begin{proof}
Based on Algorithm 3.11 in \citep{che21} and the definition of $r_0$, the observability matrix $\mathcal{O}_{r_0}^*$ is exactly the structure matrix of $\overline{{\mathcal Z}^{*}}$. By Theorem \ref{l3.1}, two states ${\bar x}_0$ and ${\bar x}'_0$ are in different cells of ${\mathcal P}(\overline{{\mathcal Z}^{*}})=\pi_{\mathcal{O}_{r_0}^*}$ iff $\mathcal{O}_{r_0}^*{\bar x}_0\neq\mathcal{O}_{r_0}^*{\bar x}'_0$. It follows that two states are distinguishable iff they are in different cells of ${\mathcal P}(\overline{{\mathcal Z}^{*}})$. And BN (\ref{BN-out}) is observable iff ${\mathcal P}(\overline{{\mathcal Z}^{*}})$ is trivial.
\end{proof}	
		
\begin{dfn}
For BN $(\ref{BN-out})$, the smallest $M$-invariant dual subspace generated by ${\mathcal F}_{\ell}\{y(t)\}$, where  $y(t)\sim {\bar y}(t)$,
is called the unobservable subspace of $(\ref{BN-out})$.
\end{dfn}
		
\subsection{Construction of observable output functions}
Utilizing the complete structural characterization of the smallest $M$-invariant dual subspaces generated by a set of Boolean functions, as provided in subsection \ref{III-B}, we can construct output functions that make a given BN observable.

To facilitate the statement of the subsequent theorem, we introduce the following definition: Given a partition $\pi$ of the vertex set, two $l$-cycles ${\mathcal C}_1$ and ${\mathcal C}_2$ are said to be shrinkable if there exists a pair of vertices $u \in V({\mathcal C}_1)$ and $v \in V({\mathcal C}_2)$ such that $N_{out}(u, i)$ and $N_{out}(v, i)$ are in the same cell of $\pi$ for all $i \in [0;l-1]$.


\begin{them}
\label{t4.3}
Suppose that BN $(\ref{BN-out})$ has output function $y(t)$. If ${\mathcal Z}^{*}={\mathcal F}_{\ell}\{y(t)\}$ satisfies the following conditions:
\begin{itemize}
\item[(i)] 
In STG ${\mathcal G}$, the in-neighbors of any vertex are in distinct cells of ${\mathcal P}({\mathcal Z}^{*})$;
\item[(ii)]
For each cycle in ${\mathcal G}$, there exists a vertex $v$ within the cycle such that $v$ is not in the same cell of ${\mathcal P}({\mathcal Z}^{*})$ with any other vertex of this cycle;
\item[(iii)]
In the partition ${\mathcal P}({\mathcal Z}^*)$, any two distinct cycles of equal length are not shrinkable; 
\item[(iv)] All vertices incident with loops are in different cells of ${\mathcal P}({\mathcal Z}^{*})$.
\end{itemize}
then $(\ref{BN-out})$ is observable.
\end{them}
		
\begin{proof}

According to Theorem \ref{ovservability}, we have proved that system $(\ref{BN-out})$ is observable if ${\mathcal P}(\overline{{\mathcal Z}^{*}})$ is trivial. We now provide separate proofs of Theorem \ref{t4.3} depending on whether STG ${\mathcal G}$ is connected or not.

(1) ${\mathcal G}$ is connected.

(1.1) If ${\mathcal G}$ contains a loop, the condition (i) means that ${\mathcal G}$ is the quotient digraph of ${\mathcal P}(\overline{{\mathcal Z}^{*}})$ according to Remark \ref{r6}. Equivalently, ${\mathcal P}(\overline{{\mathcal Z}^{*}})$ is trivial.

(1.2) Consider the case where ${\mathcal G}$ contains a cycle ${\mathcal C}$. Let $\pi'_0$ denote the partition induced by $V({\mathcal C})$ from ${\mathcal P}({\mathcal Z}^*)$.
As stated in condition (ii), there exists a vertex in $V({\mathcal C})$ that is partitioned into different cells of $\pi'_0$ from any other vertex in $V({\mathcal C})$. Based on Theorem \ref{l3.10}, the equitable partition of ${\mathcal C}$ generated by $\pi'_0$ is trivial. In this case, condition (i) means that ${\mathcal G}$ is the quotient digraph of ${\mathcal P}(\overline{{\mathcal Z}^{*}})$, according to Theorem \ref{the-5}. In other words, ${\mathcal P}(\overline{{\mathcal Z}^{*}})$ is trivial.

(2) If ${\mathcal G}$ is not connected, we can infer from the preceding proof that any two vertices in each component of ${\mathcal G}$ belong to different cells of ${\mathcal P}(\overline{{\mathcal Z}^{*}})$. We now prove that every vertex is distinguishable from all vertices in other components.

(2.1) Condition (iv) asserts that all vertices incident with loops are in distinct cells of ${\mathcal P}(\overline{{\mathcal Z}^{*}})$. 


(2.2) From the preceding proof, we can conclude that for any cycle of length $l$ in the digraph, its vertices belong to different cells of ${\mathcal P}(\overline{{\mathcal Z}^{*}})$. Thus, in the quotient digraph corresponding to ${\mathcal P}(\overline{{\mathcal Z}^{*}})$, the vertices resulting from shrinking the cells containing these vertices in the $l$-cycle can induce an $l$-cycle. This implies that vertices in cycles with different lengths cannot be in the same cell of ${\mathcal P}(\overline{{\mathcal Z}^{*}})$. 

We now prove that any two vertices from two cycles of equal length $l$ cannot be in the same cell of ${\mathcal P}(\overline{{\mathcal Z}^{*}})$. Let ${\mathcal C}_1$ and ${\mathcal C}_2$ be two distinct $l$-cycles. Suppose, by way of contradiction, that there exists a pair of vertices $u\in V({\mathcal C}_1)$ and $v\in V({\mathcal C}_2)$ contained in the same cell of ${\mathcal P}(\overline{{\mathcal Z}^{*}})$. By the definition of the equitable partition, $N_{out}(u, i)$ and $N_{out}(v, i)$ must be in the same cell of ${\mathcal P}(\overline{{\mathcal Z}^{*}})$ for all $i \in [0;l-1]$. Since ${\mathcal P}({\mathcal Z}^{*})\preceq {\mathcal P}(\overline{{\mathcal Z}^{*}})$, this implies that $u$ and $v$ are in the same cell of ${\mathcal P}({\mathcal Z}^{*})$. However, this contradicts condition (iii). Thus, the vertices in cycles with the same length cannot be in the same cell of ${\mathcal P}(\overline{{\mathcal Z}^{*}})$. 

In conclusion, we have demonstrated that all vertices in cycles (including loops, which are cycles of length 1) are in distinct cells of ${\mathcal P}(\overline{{\mathcal Z}^{*}})$. Consequently, these vertices are mutually distinguishable.

(2.3) We now prove that any two states $u$ and $v$ in different components of ${\mathcal G}$ are distinguishable from each other. 
According to \citet{che12}, any trajectory in a BN eventually converges to a directed cycle (including loops, which are cycles of length 1). Consider the BN (\ref{BN-out}) with initial states $u\in\Delta_{2^n}$ and $v\in\Delta_{2^n}$. For sufficiently large step $t$, the corresponding solutions ${\bar x}(t;u)$ and ${\bar x}(t;v)$ necessarily are states in cycles. As we have previously shown, these cycle states are distinguishable. Consequently, we conclude that states $u$ nor $v$ are distinguishable.

\end{proof}
		
\begin{exa}{\label{e3.9}}
{\color{red}{Consider the BN (\ref{ep-new2-3.8}) in Example \ref{e3.2}.}} Its STG ${\mathcal G}$ is shown in Fig. \ref{fig3.1}. Utilizing Theorem \ref{t4.3}, we construct an observable output function $y(t)$. To satisfy condition (i) of Theorem \ref{t4.3}, we need at least 9 cells to partition the in-neighbors of $\delta_{32}^{32}$ into different cells. An observable output matrix is
$$
\setlength{\arraycolsep}{1.8pt}
\begin{array}{llllllllllllllllllllllllllllllllllll}
\!\!E=\delta_{16}[\!\!&9\!\!&2\!\!&2\!\!&3\!\!\!\!\!\!\!\!&3 &4 &4 &5
&1 &1 &2 &2 &2 &2 &1 &1\\
&5 &6 &6 &7 &7 &8 &8 &1
&1 &1 &1  &1 &1  &1 &2 &1].
\end{array}
$$

In BN
\begin{equation}{\label{3.8-1}}
\left\{
\begin{aligned}
{\bar x}(t+1)&=M{\bar x}(t),\\
{\bar x}(t)&=E{\bar x}(t),
\end{aligned}
\right.
\end{equation}
we can determine the initial state of any given output sequence. That is, this BN is observable.

\begin{figure}
\centering
{\begin{tikzpicture}[->,thick]
\tikzstyle{every node}=[draw,circle,radius=0.8mm,inner sep=0pt,minimum size=1.5em];
\node (v1) [fill=c9]at (1,0.8){$1$};
\node (v2)[fill=c2] at (6,2.4){$2$};
\node (v3)[fill=c2] at (1,1.6){$3$};
\node (v4)[fill=c3] at (6,1.6){$4$};
\node (v5)[fill=c3] at (1,2.4){$5$};
\node (v6)[fill=c4] at (6,0.8){$6$};
\node (v7)[fill=c4] at (1,-0.8){$7$};
\node (v8)[fill=c5] at (6,0){$8$};
\node (v9)[fill=c1] at (4,2.4){$9$};
\node (v10)[fill=c1] at (5,3.2){$10$};
\node (v11)[fill=c2]  at (4,1.6){$11$};
\node (v12)[fill=c2]  at (5,2.4){$12$};
\node (v13)[fill=c2]  at (3,-0.8){$13$};
\node (v14)[fill=c2]  at (3,2.4){$14$};
\node (v15)[fill=c1]  at (3,0){$15$};
\node (v16)[fill=c1]  at (3,3.2){$16$};
\node (v17)[fill=c5]  at (1,-1.6){$17$};
\node (v18)[fill=c6]  at (6,-0.8){$18$};
\node (v19)[fill=c6]  at (1,-2.4){$19$};
\node (v20)[fill=c7]  at (6,-1.6){$20$};
\node (v21)[fill=c7]  at (1,-3.2){$21$};
\node (v22)[fill=c8]  at (6,-2.4){$22$};
\node (v23)[fill=c8]  at (1,3.2){$23$};
\node (v24)[fill=c1]  at (7,0.4){$24$};
\node (v25)[fill=c1]  at (4,0){$25$};
\node (v26)[fill=c1]  at (5,1.6){$26$};
\node (v27)[fill=c1]  at (1,0){$27$};
\node (v28)[fill=c1]  at (5,0){$28$};
\node (v29)[fill=c1]  at (0,0){$29$};
\node (v30)[fill=c1]  at (3,1.6){$30$};
\node (v31)[fill=c2]  at (7,-2){$31$};
\node (v32)[fill=c1]  at (2,0){$32$};

\path (v29) edge (v27);
\path (v1) edge (v32);
\path (v3) edge (v32);
\path (v5) edge (v32);
\path (v7) edge (v32);
\path (v17) edge (v32);
\path (v19) edge (v32);
\path (v21) edge (v32);
\path (v23) edge (v32);
\path (v27) edge (v32);
\path (v32) edge (v15);
\path (v14) edge (v9);
\path (v16) edge (v9);
\path (v30) edge (v11);
\path (v13) edge (v25);
\path (v15) edge (v25);
\path (v9) edge (v26);
\path (v11) edge (v26);
\path (v25) edge (v28);
\path (v10) edge (v2);
\path (v12) edge (v2);
\path (v26) edge (v4);
\path (v28) edge (v8);
\path (v2) edge (v24);
\path (v4) edge (v24);
\path (v6) edge (v24);
\path (v8) edge (v24);
\path (v18) edge (v24);
\path (v20) edge (v24);
\path (v22) edge (v24);
\draw (v24) to [out=45,in=315,looseness=6] (v24);
\draw (v31) to [out=45,in=315,looseness=6] (v31);
\end{tikzpicture}
}
\caption{STG of BN (\ref{ep-new2-3.8}). For the partition $\pi=\{\{x|Ex=\delta_{9}^{i}\}, i=1,2\ldots,9\}$, vertices in the same cell are assigned with the same colour. }
\label{fig3.1}
\end{figure}
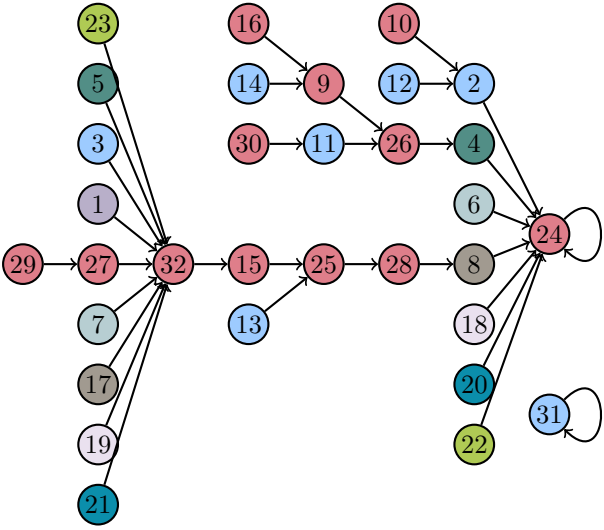
\end{exa}

\section{Conclusions}

In this paper, we constructed a bijection between dual subspaces and partitions of the state-transition graph of a BN, where these partitions can be reduced by equivalence relations. Furthermore, we proved that a dual subspace is $M$-invariant iff the corresponding partition is equitable. Thus, we can describe the dynamics of the equivalence classes obtained from an $M$-invariant dual subspace using the quotient digraph induced by the corresponding equitable partition. On the other hand, with the help of this bijection, we thoroughly characterised the structures of the smallest $M$-invariant dual subspaces generated by a set of Boolean functions. We proved that a BN with given output functions is observable iff the partition corresponding to the smallest $M$-invariant dual subspace containing the output functions (defined as the unobservable subspace) is trivial. We obtained a method for constructing observable output functions based on the structural characterization.

\bibliographystyle{model5-names}
\bibliography{cas-refs}

\end{document}